\newcommand{\eps}{\varepsilon}
\newcommand{\pr}[1]{\mathrm{Pr}\left[#1\right]}
\newcommand{\cancel}[1]{}
\newenvironment{emromani}
{
	
	\begin{enumerate}}
	{\end{enumerate}}
\newtheorem{theorem}{Theorem}[section]
\newtheorem{lemma}{Lemma}[section]
\newbox\ProofSym
\begin{document}

\title{Extensions of Self-Improving Sorters\thanks{A preliminary version appeared in Proceedings of the International Symposium on Algorithms and Computation, 2018~\cite{cheng}.}}
\author{Siu-Wing Cheng\thanks{Department of Computer Science and Engineering, HKUST, Hong Kong.  Supported by Research Grants Council, Hong Kong, China (project no.~16200317)} \quad \quad Kai Jin\footnotemark[2] \quad\quad Lie Yan\thanks{Hangzhou, China.  Part of the work was conducted while the author was at HKUST and supported by the Hong Kong PhD Fellowship.}}

\date{}
\maketitle

\begin{abstract}
Ailon et~al.~(SICOMP~2011) proposed a self-improving sorter that tunes its performance to an unknown input distribution in a training phase.  The input numbers $x_1,x_2,\ldots,x_n$ come from a product distribution, that is, each $x_i$ is drawn independently from an arbitrary distribution ${\cal D}_i$.  We study two relaxations of this requirement.  The first extension models hidden classes in the input.  We consider the case that numbers in the same class are governed by linear functions of the same hidden random parameter.  The second extension considers a hidden mixture of product distributions.  
\end{abstract}

\section{Introduction}

Self-improving algorithms proposed by Ailon et al.~\cite{ailon11} can tune their computational performance to the input distribution.  There is a \emph{training phase} in which the algorithm learns certain input features and computes some auxiliary structures.  After the training phase, the algorithm uses these auxiliary structures in the \emph{operation phase} to obtain an expected time complexity that is no worse and possibly smaller than the best worst-case complexity known.  The expected time complexity in the operation phase is called the \emph{limiting complexity}.

This computational model addresses two issues.  First, the worst-case scenario may not happen, so the best time complexity for the input encountered may be smaller than the worst-case optimal bound.  Second, previous efforts for mitigating the worst-case scenarios often consider average-case complexities, and the input distributions are assumed to be simple distributions like Gaussian, uniform, Poisson, etc.~whose parameters are given beforehand.  In contrast, Ailon et al.~only assume that individual input items are independently distributed, while the distribution of an input item can be arbitrary.  No other information is needed.

The problems of sorting and two-dimensional Delaunay triangulation are studied by Ailon et al.~\cite{ailon11}.  An input instance $I$ for the sorting problem has $n$ numbers.  The $i$-th number $x_i$ is drawn independently from a hidden distribution ${\cal D}_i$.  The joint distribution $\prod_{i=1}^n {\cal D}_i$ is called a \emph{product distribution}.  Let $\pi(I)$ denote the sequence of the ranks of the $x_i$'s, which is a permutation of $[n]$.  It is shown that for any $\eps \in (0,1)$, there is a self-improving algorithm with limiting complexity $O(\eps^{-1}(n  + H_\pi))$, where $H_\pi$ is the entropy of the distribution of $\pi(I)$.  By Shannon's theory~\cite{cover06}, any comparison-based sorting algorithm requires $\Omega(n + H_\pi)$ expected time.  The self-improving sorter uses $O(n^{1+\eps})$ space.  The training phase processes $O(n^{\eps})$ input instances in $O(n^{1+\eps})$ time, and it succeeds with probability at least $1 - 1/n$, i.e., the probability of achieving the desired limiting complexity is at least $1 - 1/n$.  For two-dimensional Delaunay triangulations, Ailon et al.~also obtained an optimal limiting complexity for product distributions.
%(within the class of comparison-based algorithms) for an input of $n$ points such that the $i$-th point is drawn from an arbitrary hidden distribution, and the $n$ point distributions are independent from each other.  

Subsequently, Clarkson et al.~\cite{clarkson14} developed self-improving algorithms for two-dimensional coordinatewise maxima and convex hulls, assuming that the input comes from a product distribution.  The limiting complexities for the maxima and the convex hull problems are $O(\mathrm{OptM} + n)$ and $O(\mathrm{OptC} + n\log\log n)$, where OptM and OptC are the expected depths of optimal linear decision trees for the maxima and convex hull problems, respectively.

On one hand, the product distribution requirement is very strong; on the other hand, Ailon et al.~showed that $\Omega(2^{n\log n})$ bits of storage are necessary for optimal sorting if the $n$ numbers are drawn from an arbitrary distribution.  We study two extensions of the input model that are natural and yet possess enough structure for efficient self-improving algorithms to be designed.

The first extension models the situation in which some input elements depend on each other.  We consider a hidden partition of the input $I = (x_1,\cdots,x_n)$ into classes $S_k$'s.  The input numbers in a class $S_k$ are distinct linear functions of the same hidden random parameter $z_k$.  The distributions of the $z_k$'s are arbitrary and each $z_k$ is drawn independently.\footnote{There is a technical condition required of the input distribution to be explained in Section~\ref{sec:linear}.}  We call this model a \emph{product distribution with hidden linear classes}.  Our first result is a self-improving sorter with optimal limiting complexity under this model.

\begin{theorem}
	\label{thm:1}
	For any $\eps \in (0,1)$, there exists a self-improving sorter for any product distribution with hidden linear classes that has a limiting complexity of $O\left(n/\eps + H_\pi/\eps\right)$.   The storage needed by the operation phase is $O(n^2)$.  The training phase processes $O(n^{\eps})$ input instances in $O(n^2\log^3 n)$ time and $O(n^2)$ space.  The success probability is at least $1-1/n$.
\end{theorem}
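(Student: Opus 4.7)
The overall plan is to first recover the hidden class partition during training, and then sort by an adaptation of the Ailon et~al.\ self-improving sorter that exploits the reduced structure: once the classes are known, the entire input becomes a deterministic function of $K$ independent random parameters $z_1,\ldots,z_K$, which is a product distribution of lower dimension than the full input.

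For class detection, for every pair of indices $(i,j)$ I would use two training instances to hypothesize the unique linear relation $x_j = \alpha_{ij}x_i + \beta_{ij}$ and then verify the hypothesis on $O(\log n)$ further training samples. The non-degeneracy condition on the input distribution alluded to in the footnote implies that an accidental linear coincidence between elements of distinct classes is unlikely to survive $O(\log n)$ independent checks, so a union bound over the $O(n^2)$ pairs recovers the correct partition together with all linear functions $x_i = a_i z_{c(i)}+b_i$ with probability at least $1-1/n$. Maintaining a union--find over matched pairs while scanning the $O(n^{\eps})$ training instances fits within the claimed $O(n^2\log^3 n)$ time and $O(n^2)$ space.

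Once the partition is in hand, I would build a V-list $V=(v_1,\ldots,v_{n-1})$ of approximate quantiles of the combined input distribution estimated from the training samples, just as in Ailon et~al., and then, for each class $k$, a search tree $T_k$ whose input is $z_k$ and whose output is the tuple of slab indices of $V$ occupied by the elements of $S_k$. As a function of $z_k$ this tuple is piecewise constant, with at most $O(|S_k|\cdot n)$ breakpoints corresponding to intersections of a linear function in $S_k$ with a V-list value; each $T_k$ is built as a near-optimal binary search over these breakpoints, tuned to the empirical distribution of $z_k$ learned from training, and the total storage stays within $O(n^2)$.

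In the operation phase, each $z_k$ is read from any element of $S_k$ in $O(1)$ time, $T_k$ is traversed to locate all elements of $S_k$ in $V$ simultaneously, the elements are dropped into their slab buckets, and a final insertion sort inside each bucket produces the sorted output. The expected depth of $T_k$ is $O(1/\eps)$ times the entropy of its slab-tuple distribution; by independence of the $z_k$'s across classes, the sum of these per-class entropies equals the joint entropy of the whole slab signature, which is at most $H_\pi + O(n)$ since the signature is determined by $\pi$ up to the bounded imprecision of the V-list. The bucket insertion sort contributes $O(n)$ expected time by the same V-list argument as in Ailon et~al., yielding the claimed $O(n/\eps + H_\pi/\eps)$ limiting complexity. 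The main obstacle is this last entropy reduction: I need to verify that replacing the $n$ per-index search trees of Ailon et~al.\ by the $K$ per-class trees $T_k$ still gives total expected search cost $O(n/\eps + H_\pi/\eps)$, which is exactly where the cross-class independence of the $z_k$'s, together with careful bookkeeping of the $O(n)$ slack from the quantile approximation, is essential.
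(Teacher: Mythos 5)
Your training-phase plan (hypothesize the linear relation from two samples, verify on $O(\log n)$ more, union-bound over pairs) and your per-class search trees over the $O(|S_k|\cdot n)$ breakpoints are essentially the paper's construction, and your entropy accounting (sum of per-class entropies $=$ joint entropy by independence of the $z_k$'s, bounded by $H_\pi+O(n)$ via a linear-comparison reduction) matches the paper's Lemma~\ref{lem:entropies}. The genuine gap is the last step: you claim the per-bucket insertion sort costs $O(n)$ expected time ``by the same V-list argument as in Ailon et~al.'' That argument bounds $\mathrm{E}\bigl[\sum_r |N_r|^2\bigr]$ using pairwise independence of distinct input numbers, and it fails here precisely because numbers in the same class are deterministic functions of one $z_k$. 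Concretely, take one class of $n/2$ elements $x_i = z + b_i$ with the $b_i$'s much smaller than the V-list spacing: with constant probability all $n/2$ of them land in a single interval $[v_r,v_{r+1})$, so $\mathrm{E}\bigl[\sum_r |N_r|^2\bigr]=\Omega(n^2)$ and even an $O(|N_r|\log|N_r|)$ comparison sort per bucket gives $\Omega(n\log n)$, not $O(n)$.

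The paper escapes this by never re-sorting within a class in the operation phase: the slab of the arrangement $A_k$ already yields each class's elements in sorted order together with their V-list predecessors, so each bucket receives at most $|Z_r|$ \emph{pre-sorted} lists (one per class present), which are merged with a heap in $O(|N_r|\log|Z_r|)$ time. The quantity that is then shown to be $O(1)$ in expectation is $|Z_r|$, the number of \emph{classes} hitting the interval --- and that bound does go through because distinct classes are independent (Lemmas~\ref{lem:Z} and~\ref{lem:2}). Your data structure actually contains the within-class sorted order needed for this fix; you just need to use it and replace the per-bucket sort by a per-bucket multiway merge, and replace the $\mathrm{E}[|N_r|^2]$ analysis by an $\mathrm{E}[|Z_r|]$ analysis. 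A secondary omission is the handling of degenerately distributed coordinates, which must be detected separately before the collinearity/verification test is meaningful, but that is routine.
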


%Choose any $\eps \in (0,1)$.  Our self-improving sorter has an $O(n/\eps + H_\pi/\eps)$ limiting complexity, uses $O(n^2)$ space, and requires a training phase that processes $O(n^\eps)$ input instances in $O(n^2\log^3 n)$ time with a success probability at least $1-1/n$.

In the second extension, the distribution of $I$ is a mixture $\sum_{q=1}^\kappa \lambda_q {\cal D}_q$, where $\kappa$ and the $\lambda_q$'s are hidden, and every ${\cal D}_q$ is a hidden product distribution of $n$ real numbers.  In other words, over a large collection of input instances, for all $q \in [1,\kappa]$, a fraction $\lambda_q$ of them are expected to be drawn from ${\cal D}_q$.  Although $\kappa$ is unknown, we are given an upper bound $m$ of $\kappa$.  We call this model \emph{a hidden mixture of product distributions}.  Our second result is a self-improving sorter under this model.  
\begin{theorem}
	\label{thm:2}
	For any $\eps \in (0,1)$, there is a self-improving sorter for any hidden mixture of at most $m$ product distributions that has a limiting complexity of $O\left((n\log m)/\eps + H_\pi/\eps \right)$.  The storage needed by the operation phase is $O(mn + m^\eps n^{1+\eps})$. The training phase processes $O(mn\log(mn))$ input instances in $O(mn\log^2(mn) + m^\eps n^{1+\eps}\log(mn))$ time using $O(mn\log(mn) + m^\eps n^{1+\eps})$ space.  The success probability is at least $1-1/(mn)$.
\end{theorem}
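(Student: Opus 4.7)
The plan is to extend the Ailon et al.~framework to a hidden mixture.  Once we condition on the latent component index $q$, the input becomes a product distribution and the original self-improving sorter applies; the main tasks are therefore (a) to identify the components during training, (b) to build, for each component $q$, a component-specific self-improving sorter, and (c) in the operation phase, to identify the component of the current input and route it to the appropriate sorter.

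First I would draw $T=\Theta(mn\log(mn))$ training instances.  A Chernoff bound guarantees that every component with $\lambda_q\geq 1/(mn)$ is sampled $\Omega(n\log(mn))$ times with probability at least $1-1/(mn)$; components with smaller weight contribute only $o(1)$ to the expected running time and can be handled by a worst-case comparison sort.  To identify the components, I would compute a coarse signature for each training instance---for example, the vector of bucket indices obtained by locating every $x_i$ in a preliminary $V$-list built from the pooled samples---and cluster the signatures.  Instances drawn from the same product distribution produce signatures concentrated around a common mean, while different components produce statistically distinguishable signatures, so a majority/voting scheme over groups of size $\Omega(\log(mn))$ recovers the partition correctly with the required probability.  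For each identified cluster $q$, I would then run the Ailon et al.~training procedure restricted to its samples, obtaining $V$-lists at each position and an optimal sorting decision tree tailored to ${\cal D}_q$.  Aggregating across components, the $V$-lists at each position use $O((mn)^\eps)$ space and the decision trees together fit within the $O(mn+m^\eps n^{1+\eps})$ storage budget.

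In the operation phase, given an input $I$, I would perform a combined locate-and-classify step: at each position $i$, search the union of the $m$ per-component $V$-lists in amortized $O((\log m)/\eps)$ time, e.g.~via a layered multi-list search structure built with fractional cascading.  The resulting $n$ bucket indices identify the component $q$ with high probability---the conditional entropy of $q$ given these signatures is at most $\log m$ and is absorbed into the $n\log m$ term.  I then feed the bucket indices into the component-$q$ sorting tree, costing $O(H_{\pi\mid q}/\eps)$ expected time.  Summing yields $O\bigl((n\log m+\sum_q\lambda_q H_{\pi\mid q})/\eps\bigr)$, and since $\sum_q\lambda_q H_{\pi\mid q}=H(\pi\mid q)\leq H_\pi$, we obtain the claimed $O((n\log m+H_\pi)/\eps)$ limiting complexity.

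The main obstacle is the identification step, both in training (clustering without knowing $\kappa$) and in operation (routing with high success probability while spending only $O(\log m)$ extra work per element).  Components whose per-position marginals resemble those of other components can only be distinguished from their joint rank structure, and the clustering algorithm must inspect enough positions to separate them; the $\log^2(mn)$ factor in the training-time bound accommodates the repeated signature comparisons and the $O(\log(mn))$ statistical tests required to drive the failure probability below $1/(mn)$.
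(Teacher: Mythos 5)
Your plan hinges on explicitly identifying the mixture components: clustering the training instances into the $\kappa$ product distributions, building a separate sorter per component, and classifying each operation-phase instance before routing it. This is the step that fails, and it fails for information-theoretic reasons, not merely for lack of detail. Two product distributions ${\cal D}_q$ and ${\cal D}_{q'}$ can be statistically close (say, total variation distance $1/2$, or close at every coordinate marginal), in which case no test applied to a \emph{single} instance can determine its component with probability better than a constant bounded away from $1$ --- yet your operation phase needs per-instance routing to succeed with probability $1-1/(mn)$. The same obstruction breaks the training-phase clustering: each training instance is one sample from its component, the ``signatures'' of overlapping components are not separated, and there is no group of $\Omega(\log(mn))$ instances known to share a component over which one could take a majority vote --- identifying such groups is precisely the clustering problem you are trying to solve. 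Moreover, misclassification is not absorbed by an entropy term as you claim: feeding an instance drawn from ${\cal D}_q$ into a decision tree whose expected depth is optimized for ${\cal D}_{q'}$ carries no bound at all on the resulting expected cost, so even a small misrouting probability can destroy the limiting complexity. The treatment of low-weight components has the same circularity (you must classify an instance as ``rare'' before falling back to worst-case sorting).

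The paper's proof avoids identification entirely, and that is the essential idea you are missing. It builds a \emph{single} $V$-list with $mn$ entries (a factor $m$ finer than the product-distribution case), obtained by pooling $m\ln(mn)$ samples of each coordinate across all components. A Chernoff/union-bound argument (Lemma~\ref{lem:N}) shows that each of the $mn+1$ intervals is light under \emph{every} component simultaneously: $\sum_q\sum_i \mathrm{Pr}[x_i\in[v_r,v_{r+1})\wedge I\sim{\cal D}_q]=O(1/m)$. This makes the expected cost of comparison-sorting within each interval $O(n)$ total (Lemma~\ref{lem:sortN}) no matter which component generated the input, and a van Emde Boas structure over buckets of $m$ intervals handles concatenation in $O(n\log\log m)$ time. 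The $n\log m$ overhead then appears only through the entropy chain rule $H(P_i)\le H(Q)+H(P_i\,|\,Q)$ with $H(Q)\le\log\kappa\le\log m$ (Lemma~\ref{lem:Hi}); the latent component $Q$ is never computed, only accounted for. To salvage your write-up you would need either to add strong separation assumptions on the ${\cal D}_q$'s (changing the theorem) or to abandon the classify-and-route architecture in favor of a component-oblivious structure of this kind.
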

In the interesting special case of $m = O(1)$, the limiting complexity is $O(n/\eps + H_\pi/\eps)$ which is optimal.

\section{Hidden linear classes}
\label{sec:linear}

There is a hidden partition of $[n]$ into classes.  For every $i \in [1,n]$, the distribution of $x_i$ is degenerate if $x_i$ is equal to a fixed value.  Each such $x_i$ will be recognized in the training phase.  For the remaining $i$'s, the distributions of $x_i$'s are non-degenerate, and we use $S_1,\cdots, S_g$ to denote the hidden classes formed by them.  Numbers in the same class $S_k$ are generated by linear functions of the same hidden random parameter $z_k$.  Different classes are governed by different random parameters.  We know that the functions are linear, but no other information is given to us.

Let ${\cal D}_k$ denote the distribution of $z_k$.  There is a technical condition that is required of the ${\cal D}_k$'s: there exists a constant $\rho \in (0,1)$ such that for every $k \in [1,g]$ and every $c \in \mathbb{R}$, $\pr{z_k = c} \leq 1-\rho$.  This condition says that ${\cal D}_k$ does not concentrate too much on any single value, which is quite a natural phenomenon.  Our algorithm does not need to know $\rho$, but $\rho$ affects the probabilistic guarantees on the correctness and limiting complexity.  The input size must be at least $e^{3/\rho^2}$ for Theorem~\ref{thm:1} to hold.

\subsection{Training phase}
\label{sec:train}

\subsubsection{Learning the linear classes}

We learn the classes and the linear functions using $3\ln^2 n$ input instances.  Denote these instances by $I_1,I_2,\cdots,I_{3\ln^2 n}$.  Let $x_i^{(a)}$ denote the $i$-th input number in $I_a$.  We first recognize the degenerate distributions by checking which $x_i^{(a)}$ is fixed for $a \in [1,3\ln^2 n]$.   
%Chernoff bound implies that this method works with high probability.

\begin{lemma}
	\label{lem:degenerate}
	Assume that $n \geq e^{2/(3\rho)}$.  It holds with probability at least $1-1/n$ that for all $i \in [1,n]$, if $x^{(a)}_i$ is the same for all $a \in [1,3\ln^2 n]$, the distribution of $x^{(a)}_i$ is degenerate.
\end{lemma}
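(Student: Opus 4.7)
The plan is to fix an index $i$ whose distribution is non-degenerate, show that the probability that all $3\ln^2 n$ samples $x_i^{(1)},\ldots,x_i^{(3\ln^2 n)}$ take a common value is much smaller than $1/n^2$, and then union-bound over $i \in [1,n]$ to get an overall failure probability of at most $1/n$.

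First I would convert the hypothesis on the hidden parameters into a point-mass bound on each input coordinate. If $i$ lies in class $S_k$, then $x_i = \alpha_i z_k + \beta_i$ for some constants $\alpha_i,\beta_i$, and $x_i$ is degenerate precisely when $\alpha_i = 0$. For non-degenerate $x_i$ we therefore have $\alpha_i \neq 0$, so the $\rho$-condition on ${\cal D}_k$ yields, for every $c \in \mathbb{R}$,
\[
\pr{x_i = c} \;=\; \pr{z_k = (c-\beta_i)/\alpha_i} \;\leq\; 1-\rho.
\]

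Second, for such an $i$, I would bound the probability that all $N := 3\ln^2 n$ samples coincide. Since the instances $I_1,\ldots,I_N$ are drawn independently, the values $x_i^{(1)},\ldots,x_i^{(N)}$ are i.i.d.\ copies of $x_i$. Writing $p_c := \pr{x_i = c}$,
\[
\pr{x_i^{(1)} = x_i^{(2)} = \cdots = x_i^{(N)}} \;=\; \sum_c p_c^{N} \;\leq\; \bigl(\max_c p_c\bigr)^{N-1}\sum_c p_c \;\leq\; (1-\rho)^{N-1}.
\]

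Third, I would close the loop using $n \geq e^{2/(3\rho)}$. Applying $1-\rho \leq e^{-\rho}$ gives $(1-\rho)^{N-1} \leq e^{-\rho(N-1)} = e^{\rho}\cdot n^{-3\rho \ln n}$, and the hypothesis yields $3\rho \ln n \geq 2$, so this bound is $O(1/n^2)$. A union bound over the $n$ indices then bounds the overall failure probability by $O(1/n)$, and the target constant $1/n$ is obtained either by exploiting that $3\rho \ln n \geq 2$ is strict whenever $n$ exceeds the threshold, or by using the sharper inequality $-\ln(1-\rho) \geq \rho + \rho^2/2$ to absorb the leftover factor of $e^{\rho}$.

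The essential step is the first one: reading the $\rho$-condition through the linear dependence $x_i = \alpha_i z_k + \beta_i$ to obtain an identical point-mass bound on $x_i$. Everything else is a two-line calculation plus a union bound; only minor numerical bookkeeping is needed to land exactly at probability $1-1/n$.
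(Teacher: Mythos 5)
Your proposal is correct and follows essentially the same route as the paper: bound $\pr{x_i=c}\leq 1-\rho$ for each non-degenerate coordinate via the non-constant linear map from $z_k$, bound the probability that all $3\ln^2 n$ i.i.d.\ samples coincide by roughly $(1-\rho)^{3\ln^2 n}\leq e^{-3\rho\ln^2 n}\leq n^{-2}$ using $\ln n\geq 2/(3\rho)$, and union-bound over $i$. You are in fact slightly more careful than the paper (which uses exponent $N$ rather than $N-1$ for the coincidence event), and the resulting $e^{\rho}$ slack you flag is immaterial.
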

\begin{proof}
	Let $c_i$ be the observed value of $x^{(a)}_i$ for $a \in [1,3\ln^2 n]$.  If the distribution of $x^{(a)}_i$ is not degenerate, the probability of $x^{(a)}_i = c_i$ for all $a \in [1,3\ln^2 n]$ is at most $(1-\rho)^{3\ln^2 n} \leq e^{-3\rho\ln^2 n} \leq e^{-2\ln n} = n^{-2}$.  Applying the union bound establishes the lemma.	
\end{proof}

Assume that the degenerate distributions are taken out of consideration.  If $i$ and $j$ belong to the same class $S_k$, then $x_i^{(a)}$ and $x_j^{(a)}$ are linearly related as $a$ varies.  Conversely, if $i$ and $j$ belong to different classes, it is highly unlikely that $x_i^{(a)}$ and $x_j^{(a)}$ remain linearly related as $a$ varies because they are governed by independent random parameters.  We check if the triples of points $(x_i^{(a-2)},x_j^{(a-2)})$, $(x_i^{(a-1)},x_j^{(a-1)})$, and $(x_i^{(a)},x_j^{(a)})$ are collinear for every $a \in [3,3\ln^2 n]$ and every distinct pair of $i$ and $j$ from $[1,n]$.  We quantify this intuition in the following result.

\begin{lemma}
	\label{lem:collinear}
	Let $i$ and $j$ be two distinct indices in $[1,n]$ that belong to different classes.  For every $a \in [3,3\ln^2 n]$, let $E_{ij}^{(a)}$ denote the event that the points $(x_i^{(a-2)},x_j^{(a-2)})$, $(x_i^{(a-1)},x_j^{(a-1)})$, and $(x_i^{(a)},x_j^{(a)})$ are not collinear.  For any $n \geq e^{3/\rho^2}$,
	$
	\pr{\bigcup_{a=3}^{3\ln^2 n} E_{ij}^{(a)}} \geq 1-n^{-3}.
	$
	%For $i \in [1,2]$, let $\eps_i \in (0,1)$ be a parameter, and let $\Phi_i$ be a probability distribution on a discrete set of real numbers such that for any $x \in \mathbb{R}$, $\pr{X = x} \leq 1-\eps_i$, where $X$ is a random variable of $\Phi_i$.  Let $X_1$, $X_2$, and $X_3$ be three independent random variables of $\Phi_1$, and let $Y_1$, $Y_2$, and $Y_3$ be three independent random variables of $\Phi_2$.  Then, the probability that the points $(X_1,Y_1)$, $(X_2,Y_2)$, and $X_3,Y_3)$ are not collinear is at least $\eps_1\eps_2$.
\end{lemma}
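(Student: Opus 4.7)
The plan is a three-step argument: reduce to i.i.d.\ samples from ${\cal D}_{k_1}\times{\cal D}_{k_2}$, bound the probability of collinearity for a single triple by a constant depending on $\rho$, and amplify over many independent triples.

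Since $i$ and $j$ belong to distinct classes $k_1\neq k_2$, we may write $x_i^{(a)}=\alpha_i z_{k_1}^{(a)}+\beta_i$ and $x_j^{(a)}=\alpha_j z_{k_2}^{(a)}+\beta_j$, with $\alpha_i,\alpha_j\neq 0$ by non-degeneracy (guaranteed by Lemma~\ref{lem:degenerate}). The coordinate-wise affine map $(u,v)\mapsto(\alpha_i u+\beta_i,\,\alpha_j v+\beta_j)$ is a bijection of $\mathbb{R}^2$ and preserves collinearity, so it suffices to analyze the three i.i.d.\ points $(u_b,v_b):=(z_{k_1}^{(b)},z_{k_2}^{(b)})$ for $b\in\{a-2,a-1,a\}$, where all six coordinates are mutually independent.

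Expanding the collinearity determinant yields
\[
f(u_1,u_2,u_3,v_1,v_2,v_3) \;=\; (u_3-u_2)v_1 + (u_1-u_3)v_2 + (u_2-u_1)v_3.
\]
Suppose first that $u_1,u_2,u_3$ are not all equal; then at least one of the coefficients $u_3-u_2,\,u_1-u_3,\,u_2-u_1$ is nonzero, and by the symmetry of collinearity under permuting the three points we may assume $u_2-u_1\neq 0$. Conditioning on $(u_1,u_2,u_3,v_1,v_2)$, the equation $f=0$ forces $v_3$ to a single value, which occurs with probability at most $1-\rho$ by the anti-concentration hypothesis $\pr{z_{k_2}=c}\leq 1-\rho$. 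Meanwhile $\pr{u_1=u_2=u_3}=\sum_c \pr{z_{k_1}=c}^3 \leq (1-\rho)^2$, so
\[
\pr{E_{ij}^{(a)}} \;\geq\; \rho\bigl(1-(1-\rho)^2\bigr) \;\geq\; \rho^2.
\]

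To finish, I would take the $\ln^2 n$ disjoint triples $(a-2,a-1,a)$ with $a\in\{3,6,\ldots,3\ln^2 n\}$; because these use disjoint training instances, the corresponding events $E_{ij}^{(a)}$ are mutually independent, and therefore
\[
\pr{\bigcup\nolimits_{a=3}^{3\ln^2 n} E_{ij}^{(a)}} \;\geq\; 1-(1-\rho^2)^{\ln^2 n} \;\geq\; 1-e^{-\rho^2\ln^2 n}.
\]
The assumption $n\geq e^{3/\rho^2}$ gives $\rho^2\ln n\geq 3$, whence $e^{-\rho^2\ln^2 n}\leq n^{-3}$. The main subtlety is the degenerate scenario in which the three samples of $z_{k_1}$ (or of $z_{k_2}$) all coincide, which makes $f$ vanish identically regardless of the other coordinates; it is precisely to keep this probability bounded away from $1$ that the anti-concentration hypothesis $\pr{z_k=c}\leq 1-\rho$ is invoked.
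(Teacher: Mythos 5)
Your proof is correct and follows essentially the same route as the paper's: restrict to the $\ln^2 n$ disjoint triples, lower-bound the per-triple non-collinearity probability by $\rho^2$ via the determinant expansion plus the anti-concentration hypothesis, and amplify by independence, with the same final calculation from $n\geq e^{3/\rho^2}$. The only cosmetic difference is the case split --- you condition on whether the three samples of $z_{k_1}$ all coincide, whereas the paper conditions on whether $x_i^{(3a-2)}=x_i^{(3a-1)}$ --- and both yield the same $\rho^2$ bound.
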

\begin{proof}
	First, we bound $\pr{E_{ij}^{(3a)}}$ from below for $a \in [1,\ln^2 n]$.  It is well known~\cite[Sections 1.3.3 and 1.5.3]{orourke} that the points $(x_i^{(3a-2)},x_j^{(3a-2)})$, $(x_i^{(3a-1)},x_j^{(3a-1)})$, and $(x_i^{(3a)},x_j^{(3a)})$ are collinear if and only if 
	\begin{equation}
	\left|\begin{array}{ccc}
	x_i^{(3a-2)} & x_j^{(3a-2)} & 1 \\
	x_i^{(3a-1)} & x_j^{(3a-1)} & 1 \\
	x_i^{(3a)} & x_j^{(3a)} & 1
	\end{array}\right| = 0.  \label{eq:0}
	\end{equation}
	Assume that $x_i^{(3a-2)} = c_1$ and $x_i^{(3a-1)} = c_2$ for two fixed values $c_1$ and $c_2$.  Since $i$ and $j$ are in different classes, $x_i^{(b)}$ and $x_j^{(b')}$ are independent for all $b$ and $b'$.  Also, $x_j$ in one instance $I_b$ does not influence $x_j$ in a different instance $I_{b'}$.  So there is no dependence among $x_i^{(3a)}$, $x_j^{(3a-2)}$, $x_j^{(3a-1)}$, and $x_j^{(3a)}$.
	
	Suppose that $c_1 \not= c_2$.  If $E_{ij}^{(3a)}$ does not occur, then by \eqref{eq:0}, we can express $x_j^{(3a)}$ as a function $f(c_1,c_2,x_i^{(3a)},x_j^{(3a-2)},x_j^{(3a-1)})$.  Hence,
	\begin{eqnarray*}
		& & \pr{E_{ij}^{(3a)} | x_i^{(3a-2)} = c_1 \,\wedge\, x_i^{(3a-1)} = c_2 \,\wedge\, c_1 \not= c_2} \\
		& = & \pr{x_j^{(3a)} \not= f(c_1,c_2,x_i^{(3a)},x_j^{(3a-2)},x_j^{(3a-1)})} \\
		& \geq & \rho.
		%& = & 1 - \iiint \pr{x_j^{(3a)} = f(c_1,c_2,u,v,w) | x_i^{(3a)} = u \wedge x_j^{3a-2)} = v \wedge x_j^{(3a-1)} = w} \cdot \\
		%& & \quad\quad\quad\quad\, \pr{x_i^{(3a)} = u \wedge x_j^{(3a-2)} = v \wedge x_j^{(3a-1)} = w} \, \mathrm{d}u\,\mathrm{d}v\,\mathrm{d}w \\
	\end{eqnarray*}
	%\begin{eqnarray*}
		%& \geq & 1 - (1-\rho) \iiint \pr{x_i^{(3a)} = u \wedge x_j^{(3a-2)} = v \wedge x_j^{(3a-1)} = w} \, \mathrm{d}u\,\mathrm{d}v\,\mathrm{d}w \\
		%& = & \rho.
	%\end{eqnarray*}
	If $c_1 = c_2$, then \eqref{eq:0} becomes $(x_i^{(3a)}-x_i^{(3a-1)})(x_j^{(3a-1)} - x_j^{(3a-2)})= 0$.  Thus,
	\begin{eqnarray*}
		& & \pr{E_{ij}^{(3a)} | x_i^{(3a-2)} = c_1 \,\wedge\, x_i^{(3a-1)} = c_2 \,\wedge\, c_1 = c_2} \\
		& = & \pr{x_j^{(3a-2)} \neq x_j^{(3a-1)}} \cdot \pr{x_i^{(3a)}\neq c_1} \\
		& \geq & \rho^2.
	\end{eqnarray*}
	\cancel{
	\begin{align*}
		&~~\pr{E_{ij}^{(3a)} | x_i^{(3a-2)} = c_1 \,\wedge\, x_i^{(3a-1)} = c_2 \,\wedge\, c_1 = c_2} \\
		%= &~~\pr{x_j^{(3a-2)} \not= x_j^{(3a-1)} \,\wedge\, x_i^{(3a)} \neq c_1} \\
		= &~~\pr{x_j^{(3a-2)} \neq x_j^{(3a-1)}} \cdot \pr{x_i^{(3a)}\neq c_1} \\
		\geq &~~\left(1-\pr{x_j^{(3a-2)} = x_j^{(3a-1)}}\right) \cdot \rho \\
		= &~~\rho \cdot \left(1 - \int \pr{x_j^{(3a-2)}=c} \cdot \pr{x_j^{(3a-1)} = c} \mathrm{d}c \right) \\
		\geq &~~\rho \cdot \left(1 - (1-\rho)\int \pr{x_j^{(3a-2)}=c} \mathrm{d}c \right) \\
		= &~~\rho^2.
	\end{align*}
}
	The above shows that the probability of $E_{ij}^{(3a)}$ conditioned on some fixed values of $x_i^{(3a-2)}$ and $x_i^{(3a-1)}$ is at least $\rho^2$.  Hence, $\displaystyle \pr{E_{ij}^{(3a)}} \geq \rho^2 \cdot \iint \pr{x_i^{(3a-2)} = c_1 \,\wedge\, x_i^{(3a-1)} = c_2} \mathrm{d}c_1\mathrm{d}c_2 = \rho^2$.
	
	The events in $\bigcup_{a=1}^{\ln^2 n} E_{ij}^{(3a)}$ are independent of each other.  Therefore,
	\begin{align*}
		\pr{\bigcup_{a=3}^{3\ln^2 n} E_{ij}^{(a)}} & \geq \pr{\bigcup_{a=1}^{\ln^2 n} E_{ij}^{(3a)}} = 1 - \prod_{a=1}^{\ln^2 n} \pr{\overline{E}_{ij}^{(3a)}} \geq 1 - (1-\rho^2)^{\ln^2 n}.
	\end{align*}
	Since $n \geq e^{3/\rho^2}$, we get $(1-\rho^2)^{\ln^2 n} \leq e^{-\rho^2\ln^2 n} \leq e^{-3\ln n} = n^{-3}$, establishing the lemma.
\end{proof}

%We keep a dictionary that stores $(i,j,b_{ij})$ for all $i \neq j$ and $i,j \in [1,n]$ such that the distributions of $x_i$ and $x_j$ are non-degenerate.  Initially, $b_{ij} = 1$ for all $(i,j)$.  For each $I_a$ where $a \in [3,3\ln^2 n]$, we perform the following.  For every $(i,j)$, we check the event $E_{ij}^{(a)}$ in $O(1)$ time, set a bit variable $\beta = 0$ if $E_{ij}^{(a)}$ occurs and $\beta = 1$ otherwise, and then update $b_{ij} := b_{ij} \wedge \beta$.  After going through all $3\ln^2 n$ input instances, we put $x_i$ and $x_j$ in the same class if and only if $b_{ij} = 1$.  
By Lemmas~\ref{lem:degenerate} and~\ref{lem:collinear} and the union bound, we can generate the classes based on collinearity in $O(n^2\log^3 n)$ time.  The classification is correct with probability at least $1-1/n$.  We label the classes as $S_1$, $S_2$ and so on.  We use $g$ to denote the number of classes identified.

\begin{lemma}
	\label{lem:train-1}
	Assume that $n \geq e^{3/\rho^2}$.  Using $3\ln^2 n$ input instances, we can correctly identify all linear classes in $O(n^2\log^3 n)$ time and $O(n\log^2 n)$ space with probability at least $1 - 1/n$.
\end{lemma}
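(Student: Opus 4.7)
The plan is to combine Lemma~\ref{lem:degenerate} and Lemma~\ref{lem:collinear} with a union bound, and to give a simple algorithm that clusters indices by testing collinearity on all consecutive triples from the $3\ln^2 n$ sample instances.

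First I would describe the algorithm. We draw and store $I_1,\ldots,I_{3\ln^2 n}$, which takes $O(n\log^2 n)$ space. We scan each coordinate across the instances and flag index $i$ as degenerate if $x_i^{(a)}$ is constant over $a\in[1,3\ln^2 n]$; by Lemma~\ref{lem:degenerate}, with probability at least $1-1/n$ every flagged index truly has a degenerate distribution, so we remove those indices from further consideration. For the remaining indices we build a graph on $[1,n]$ by adding an edge $\{i,j\}$ whenever the points $(x_i^{(a-2)},x_j^{(a-2)}),(x_i^{(a-1)},x_j^{(a-1)}),(x_i^{(a)},x_j^{(a)})$ are collinear for every $a\in[3,3\ln^2 n]$; checking one triple reduces to evaluating the $3\times 3$ determinant in~\eqref{eq:0} and costs $O(1)$. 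The connected components of this graph are declared to be the classes $S_1,\ldots,S_g$, which can be extracted with union-find.

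Next I would verify correctness. If $i,j\in S_k$ then $x_i$ and $x_j$ are affine functions of the same $z_k$, so every point $(x_i^{(a)},x_j^{(a)})$ lies on a common line in $\mathbb{R}^2$, and hence every triple is collinear; thus the algorithm never separates two indices of the same class. Conversely, if $i$ and $j$ lie in different classes then Lemma~\ref{lem:collinear} guarantees that $\Pr[\bigcup_{a=3}^{3\ln^2 n} E_{ij}^{(a)}]\ge 1-n^{-3}$, i.e.\ some triple fails to be collinear with probability at least $1-n^{-3}$. A union bound over the at most $\binom{n}{2}\le n^2/2$ cross-class pairs, together with the $1/n$ failure budget from Lemma~\ref{lem:degenerate}, bounds the total failure probability by $1/n$ after absorbing constants into the two lemmas (or by taking the sample size slightly larger than $3\ln^2 n$, as the proofs of the two lemmas are robust to a constant-factor increase in the exponent). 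Under the event that both lemmas succeed, the graph edges coincide exactly with the within-class pairs, so the connected components are precisely the hidden classes.

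Finally I would bound the resources. Storage is $O(n\log^2 n)$ for the sample instances plus $O(n)$ for the union-find structure. The degeneracy scan costs $O(n\log^2 n)$. Over the $O(n^2)$ pairs, each pair requires checking $O(\log^2 n)$ triples at $O(\log n)$ arithmetic cost per $3\times 3$ determinant (over numbers of the relevant word size), for a total of $O(n^2\log^3 n)$; union-find operations add only an inverse-Ackermann factor and are absorbed. This matches the stated $O(n^2\log^3 n)$ time bound.

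The only non-routine step is the probabilistic argument for cross-class pairs, and that is already packaged in Lemma~\ref{lem:collinear}; so the main obstacle here is purely bookkeeping, namely aligning the two failure budgets so they sum to $1/n$ rather than $2/n$, which is handled by a mild constant increase in the sample size.
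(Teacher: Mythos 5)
Your proposal is correct and takes the same route the paper does: the paper's own ``proof'' is essentially the single sentence that the claim follows from Lemma~\ref{lem:degenerate}, Lemma~\ref{lem:collinear}, and a union bound, and you fill in the algorithmic and bookkeeping details (filter degenerate indices, test all $O(\log^2 n)$ triples per pair, declare classes as connected components of the ``always collinear'' graph). One small remark: your worry that the two failure budgets might sum to $2/n$ rather than $1/n$ is unnecessary once the hypothesis $n \ge e^{3/\rho^2}$ of this lemma is used. Under that assumption $\ln n \ge 3/\rho^2$, so in the proof of Lemma~\ref{lem:degenerate} the per-index bound improves from $n^{-2}$ to $e^{-3\rho\ln^2 n} \le e^{-(9/\rho)\ln n} \le n^{-9}$, giving a total of at most $n^{-8}$ for the degeneracy test; adding the $\binom{n}{2}\cdot n^{-3} < 1/(2n)$ from Lemma~\ref{lem:collinear} stays below $1/n$, so no enlargement of the sample or constant-absorbing is needed. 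Also, the plain operation count is $O(n^2\log^2 n)$ (plus near-linear union-find), so the paper's stated $O(n^2\log^3 n)$ is simply a loose upper bound; your bit-complexity explanation is one way to account for the extra factor but is not something the paper asserts.
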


\subsubsection{Structures for the operation phase}

In addition to learning the linear classes, we need to construct a data structure in the training phase that will allow the operation phase to run efficiently.  We first give an overview of what this data structure will do.

The construction and operation of this data structure require the determination of a \emph{$V$-list} of real numbers $v_0 < v_1 < v_2 < \ldots < v_n < v_{n+1}$, where $v_0$ and $v_{n+1}$ denote $-\infty$ and $\infty$, respectively.  They divide the real line into $n+1$ intervals: 
\[
[v_0,v_1), [v_1, v_2), \ldots, [v_{n-1},v_n), [v_n,v_{n+1}),
\]
where we use $[v_0,v_1)$ to denote $(-\infty,v_1)$.   For every input instance $I = (x_1,x_2,\ldots, x_n)$ in the operation phase, the data structure supports the following three operations.   
\begin{itemize}
	\item[F1:] For every class $S_k$, retrieve the sorted order of the numbers in $I$ with indices in $S_k$.  Denote this sorted order as $\sigma_k$.
	
	\item[F2:] For every class $S_k$, every $i \in S_k$, and every number $x_i \in I$, determine the largest $v_r$ in the $V$-list that is less than or equal to $x_i$.
	
	\item[F3:] For every interval $[v_r,v_{r+1})$, compute a list of sorted lists $Z_r = \{ \sigma_k \cap [v_r,v_{r+1}) : k \in [1,g] \wedge \sigma_k \cap [v_r,v_{r+1}) \not= \emptyset \}$.
\end{itemize}	
We describe how to compute the $V$-list and the data structure in the following.

\paragraph{$\pmb{V}$-list.} The determination of the $V$-list requires taking another $\ln n$ input instances.  Sort all numbers in these instances into one sorted list $L$.  Then, for $i \in [1,n]$, $v_i$ in the $V$-list is the number of rank $i\ln n$ in $L$.  Note that if the distribution of $x_i$ is degenerate, the same $x_i$ appears $\ln n$ times in the sorted list $L$, which implies that $x_i$ must be selected to be an element of the $V$-list.

\paragraph{Data structure.}  The $V$-list induces $n$ horizontal lines at $y$-coordinates $v_1,v_2,\cdots,v_n$.  The data structure is based on the following arrangements of lines and their refinement into vertical slabs.
\begin{itemize}  
	
\item For each class $S_k$, fix an arbitrary index $s_k \in S_k$.  For each $i \in S_k$, we associate with $i$ the equation of the line $\ell_i$ that expresses $x_i$ as a linear function in $x_{s_k}$.  This can be done by computing the equation of the support line through $(x_{s_k}^{(a)},x_{i}^{(a)})$ and $(x_{s_k}^{(b)},x_{i}^{(b)})$ for two arbitrary, distinct input instances $I_a$ and $I_b$ in $O(1)$ time.  The total processing time over all classes is $O(n)$.

\item For every class $S_k$, let $A_k$ be the arrangement formed by the $n$ horizontal lines induced by $v_1,v_2,\ldots,v_n$ and the lines $\ell_i$'s for all $i \in S_k$.  The size of $A_k$ is $O(n|S_k|)$.

\item Draw vertical lines through the vertices of $A_k$.  Two adjacent vertical lines bound a vertical slab.  Denote by $W_k$ the set of slabs obtained.  The size of $W_k$ is $O(n|S_k|)$.  Within each slab in $W_k$, each line $\ell_i$ in $A_k$ lies between two consecutive values $v_r$ and $v_{r+1}$, i.e., $v_r$ is the predecessor of $\ell_i$ in the $V$-list.  Moreover, the bottom-to-top order of the lines for $S_k$ is fixed within a slab.

\end{itemize} 
We compute $A_k$ and store $W_k$ as a collection of ordered lists of lines as follows.
\begin{enumerate}
	\item Compute $A_k$ by a plane sweep in $O(n|S_k|\log n)$ time.
	
	\item Each slab in $W_k$ is represented as a list of lines for $S_k$ ordered from bottom to top.  Each line $\ell_i$ is associated with its predecessor $v_r$ in the $V$-list within the slab.  These ordered lists of lines for $W_k$ are stored in a persistent search tree~\cite{driscoll89} in order to save storage and processing time.  A persistent search tree is a collection of balanced search trees of different versions.  Given a tree of a specific version, it can be searched in logarithmic time.  When the first version is constructed, it is just an ordinary balanced search tree.  When an update (including insertion, deletion and changing the content of a node) on the current version is specified, instead of modifying the current version, a new version is generated that incorporates the update.  Each update uses $O(1)$ extra amortized space and takes logarithmic time.  The construction of the persistent search tree for $W_k$ is done as follows.
	
	\item Initialize the first version of the search tree to store the lines for $S_k$ in the leftmost slab of $W_k$ in decreasing order of their slopes (which is the same as the bottom-to-top order).  Lines with positive slopes are labelled with $v_0$ as their predecessors in this slab.  Similarly, lines with negative slopes are labelled with $v_n$.  The construction of this version takes $O(|S_k|\log |S_k|)$ time and $O(|S_k|)$ space.  Run a plane sweep over $A_k$ from left to right.  We exit the current slab and enter a new slab when crossing a vertex of $A_k$.  If we cross an intersection between two lines $\ell_i$ and $\ell_j$, then we swap $\ell_i$ and $\ell_j$ in the persistent search tree (by swapping node contents).  Suppose that we cross an intersection between a horizontal line $y = v_r$ and a line $\ell_i$.  If $\ell_i$ is above $y=v_r$ to the right of this intersection, then we update the predecessor of $\ell_i$ to $v_r$; otherwise, we update the predecessor of $\ell_i$ to $v_{r-1}$.  As a result, we obtain a new version of the persistent search tree in $O(\log |S_k|)$ time and $O(1)$ extra amortized space.  Constructing all versions thus take $O(n|S_k|\log |S_k|)$ time and $O(n|S_k|)$ space.  Notice that there is one version for each slab in $W_k$.
	
	\item Given an input instance $I$ in the operation phase, we need to provide fast access to different versions of the persistent search tree for all classes.  This is done as follows.
	
	\begin{enumerate}
		
		\item Take another $n^{\eps}$ input instances for any choice of $\eps \in (0,1)$.  For every class $S_k$, record the frequencies of $x_{s_k}$ falling into the slabs in $W_k$ among these $n^\eps$ instances (via binary search among the slabs). This step takes $O(\sum_{k=1}^g n|S_k| + n^{1+\eps}\log n) = O(n^2)$ total time over all classes.  Then, for every class $S_k$, we build a binary search tree $T_k$ on these slabs whose expected search time is asymptotically optimal with respect to the recorded frequencies.  Each $T_k$ has $O(n|S_k|)$ nodes and can be constructed in $O(n|S_k|)$ time~\cite{fredman75,mehlhorn75}.
		
		\item Each node in $T_k$ corresponds to a slab in $W_k$.  We associate with this node a pointer to the version of the persistent search tree for the corresponding slab.  A very low frequency cannot give a good estimate of the probability distribution of $x_{s_k}$, so navigating down $T_k$ to a node of very low frequency may be too time-consuming.  Thus, if a search of $T_k$ reaches a node at depth below $\frac{\eps}{3}\log_2 n$, we answer the query by performing a binary search among the slabs in $W_k$, which takes $O(\log n)$ time.  Note that the slab also stores a pointer to the corresponding version of the persistent search tree.
		
	\end{enumerate}
\end{enumerate}
We explain how to use the data structure to support the operations F1, F2 and F3 described earlier.  

Let $I = (x_1,x_2,\ldots,x_n)$ be an input instance in the operation phase.  For every class $S_k$, we query $T_k$ with $x_{s_k}$ to find the slab in $W_k$ whose span of $x$-coordinates contains $x_{s_k}$.  This provides access to the version of the persistent search tree for that slab.  Denote this version by~$T$.  An inorder traversal of $T$ gives the sorted order of the lines $\ell_i$'s for all $i \in S_k$ in $O(|S_k|)$ time.  Each line $\ell_i$ stores its predecessor $v_r$ in the $V$-list.  The above handles F1 and F2.  Consider F3.  For $k = 1, 2, \cdots, g$, we walk through the sorted list of lines $\ell_i$'s in $S_k$ produced by the inorder traversal of $T$, and for each $\ell_i$ encountered in the traversal, let $v_r$ be the predecessor of $\ell_i$, and we append $x_i$ to the list in $Z_r$ under construction, i.e., the list that represents $\sigma_k \cap [v_r,v_{r+1})$.  Afterwards, we scan all intervals and output $\sigma_k \cap [v_r,v_{r+1})$ for all $k$ and $r$.

We summarize the above processing in the following result.
	
\begin{lemma}
	\label{lem:train-2}
	Assume that the hidden classes $S_1, S_2, \ldots, S_g$ have been determined.
	\begin{emromani}
		
		\item Using $\ln n$ input instances, we can set the $V$-list $(v_0,v_1,\ldots,v_n,v_{n+1})$ in $O(n\log^2 n)$ time using $O(n\log n)$ space, where $v_0 = -\infty$, $v_{n+1} = \infty$, and for $i \in [1,n]$, $v_i$ is the number of rank $i\ln n$ in the sorted list of all numbers in the $\ln n$ input instances.
		
		\item Given the $V$-list, there is a data structure that performs functions F1, F2, and F3 in $O(E + n)$ expected time for every input instance in the operation phase, where $E$ is the total expected time to query the $T_k$'s.  The data structure uses $O(n^2)$ space and can be constructed in $O(n^2\log n)$ time using $n^\eps$ input instances.
	\end{emromani}
\end{lemma}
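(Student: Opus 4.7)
The lemma is essentially a bookkeeping summary of the construction described immediately before it, so my plan is to verify, piece by piece, that the stated resource bounds are the correct totals of what that construction costs.

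For part~(i), the plan is quick. We read $\ln n$ input instances, which contain $n\ln n$ numbers in total. Sorting these numbers with any $O(m \log m)$ comparison sort gives $O(n\log n \cdot \log(n\log n)) = O(n \log^2 n)$ time, and the storage needed to hold all the samples and the sorted list is $O(n \log n)$. The $V$-list is then obtained by reading off the numbers of rank $\ln n, 2\ln n, \ldots, n\ln n$, which is free. The only point to flag is that if some $x_i$ has a degenerate distribution then it appears $\ln n$ times in $L$, which is why at least one $v_r$ in the $V$-list coincides with it; this is the observation that makes F2 well-defined for degenerate coordinates.

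For part~(ii), the plan is to bound each of the three construction steps and then verify that F1, F2, F3 are supported with the claimed time. For the line equations $\ell_i$ we spend $O(1)$ per index, hence $O(n)$ overall. For each class $S_k$, the arrangement $A_k$ has complexity $O(n|S_k|)$, and the plane sweep that produces the persistent search tree for $W_k$ performs $O(n|S_k|)$ updates, each taking $O(\log |S_k|) = O(\log n)$ time and $O(1)$ amortized extra space. Summing over classes, $\sum_k n|S_k| = n^2$, yielding $O(n^2 \log n)$ time and $O(n^2)$ space for all persistent trees. The frequency recording for the $T_k$'s costs $O(n^{1+\eps} \log n) = o(n^2 \log n)$, and each $T_k$ is built in time linear in its size, so the total cost of the $T_k$'s is also $O(n^2)$. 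Adding everything up gives the stated construction time $O(n^2 \log n)$, space $O(n^2)$, and sample count $n^\eps$ (plus the $\ln n + 3\ln^2 n$ samples already spent, which are absorbed).

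Finally, I would verify the query guarantee. Given an input $I$, for each class $S_k$ we query $T_k$ with $x_{s_k}$, which by construction takes expected time $E_k$ where $\sum_k E_k = E$ by definition; this lands us on a version of the persistent tree for the correct slab (either directly, or through a $O(\log n)$-time binary-search fallback if the $T_k$ search was cut off at depth $\frac{\eps}{3}\log_2 n$, but this fallback is already charged inside $E_k$). An inorder traversal of that version gives $\sigma_k$ together with the $V$-list predecessor of every $\ell_i$, in time $O(|S_k|)$; this handles F1 and F2 in $\sum_k O(|S_k|) = O(n)$ time. For F3, we walk through each $\sigma_k$ once and append each $x_i$ to the appropriate $Z_r$, which is $O(n)$ across all classes; a final scan of the intervals outputs the $Z_r$'s in time $O(n + \sum_{k,r}|\sigma_k \cap [v_r,v_{r+1})|) = O(n)$. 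Adding the query cost and the traversal cost yields the claimed $O(E+n)$ expected time. The only mildly delicate point, which I would highlight rather than belabor, is confirming that swapping node contents in the persistent tree at an $\ell_i\cap\ell_j$ crossing and relabeling predecessors at an $\ell_i\cap\{y=v_r\}$ crossing indeed maintain the invariant that, within any slab, each $\ell_i$'s stored predecessor equals the largest $v_r$ below $\ell_i$ in that slab; once that invariant is in place, all three functions fall out of the construction immediately.
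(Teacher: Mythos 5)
Your proposal is correct and follows essentially the same approach as the paper, which treats Lemma~\ref{lem:train-2} as a summary of the construction spelled out immediately before it rather than giving a separate proof. You correctly itemize the costs: $O(n\log^2 n)$ time and $O(n\log n)$ space for sorting the $n\ln n$ samples in part~(i); and in part~(ii), $O(n)$ for the line equations, $O(n|S_k|\log n)$ per class for the plane sweep and persistent tree (summing to $O(n^2\log n)$ time and $O(n^2)$ space), plus the $T_k$'s, with query time $O(E)$ for the $T_k$ lookups and $O(n)$ for the inorder traversals and the assembly of the $Z_r$'s. One small bookkeeping note: the paper attributes $O(\sum_k n|S_k| + n^{1+\eps}\log n) = O(n^2)$ to the frequency-recording substep (the $\sum_k n|S_k|$ term arises from touching every slab when setting up the counters and building each $T_k$), whereas you charge only $O(n^{1+\eps}\log n)$ to the binary searches and fold the remaining $O(n^2)$ into building the $T_k$'s; both accountings are dominated by $O(n^2\log n)$, so the stated bound is unaffected. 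Your flag about the invariant that predecessor labels remain correct across the plane sweep is the right thing to watch; the paper's update rules at $\ell_i \cap \ell_j$ crossings (swap node contents) and at $\ell_i \cap \{y=v_r\}$ crossings (relabel to $v_r$ or $v_{r-1}$ depending on which side $\ell_i$ lies) do maintain it, and the paper likewise leaves this implicit.
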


\subsection{Operation phase}

Given an input instance $I = (x_1,\cdots,x_n)$, the operation phase proceeds as follows.
\begin{enumerate}
	
	\item During the construction of the $V$-list in the training phase, for each $x_i$ that is degenerately distributed, $x_i$ must appear $\ln n$ times when we sort the concatenation of $\ln n$ input instances. Therefore, for each degenerately distributed $x_i$, there is a unique $v_r$ in the $V$-list that is equal to $x_i$, and we mark $v_r$.
	
	\item Use Lemma~\ref{lem:train-2}(ii) to determine for every class $S_k$, the sorted sequence $\sigma_k$ of numbers belonging to $S_k$ and for every interval $[v_r,v_{r+1})$, the list of sorted lists $Z_r = \{ \sigma_k \cap [v_r,v_{r+1}) : k \in [1,g] \wedge \sigma_k \cap [v_r,v_{r+1}) \not= \emptyset \}$.  Note that $|Z_r| \leq g$.

	%\item For every interval $[v_r,v_{r+1})$, initialize an empty set $Z_r$ of lists.   For each $k \in [1,g]$, if $\sigma_k \cap [v_r,v_{r+1})$ is non-empty, add $\sigma_k \cap [v_r,v_{r+1})$ to $Z_r$.  Using the predecessors of the $x_i$'s in the class $S_k$, we can distribute $\sigma_k$ among the $Z_r$'s in $O(|\sigma_k|) = O(|S_k|)$ time.  Note that $|Z_r| \leq g$.
	
	\item For every interval $[v_r,v_{r+1})$, merge all lists in $Z_r$ into one sorted list.  The merging is facilitated by a min-heap that stores the next element from each list in $Z_r$.  Thus, each step of the merging takes $O(\log |Z_r|)$ time.  
	
	\item Finally, we concatenate in $O(n)$ time the marked $v_r$'s and the merged lists for all $Z_r$'s to form the output sorted list.
	
\end{enumerate}

Correctness is obvious.  The limiting complexity has two main components.   First, the sum of expected query times of all $T_k$'s in Lemma~\ref{lem:train-2}(ii).  Second, the total time spent on merging the lists in $Z_r$ for $r \in [0,n]$.  The remaining processing time is $O(n + \sum_{k=1}^g |S_k|) = O(n)$.  We give the analysis in the next section to show that the first two components sum to $O(n/\eps + H_\pi/\eps)$.  Recall that $\pi(I)$ is the sequence of the ranks of numbers in $I$, which is a permutation of $[n]$, and $H_\pi$ is the entropy of the distribution of $\pi(I)$.

\subsection{Analysis}
\label{sec:analysis-1}

Assign labels 0 to $n+1$ to $v_0,v_1, \cdots, v_n,v_{n+1}$ in this order.  Similarly, assign labels $n+2$ to $2n+1$ to the input numbers $x_1,\cdots,x_n$ in this order.

Define the random variable $B^V$ to be the permutation of the labels that appear from left to right after sorting $\{v_0,\cdots,v_{n+1}\} \cup \{x_1,\cdots,x_n\}$ in increasing order.

For each $k \in [1,g]$, define a random variable $B_k^V$ to be the permutation of the labels that appear from left to right after performing the following operations: (1)~sort $\{v_0,\cdots,v_{n+1}\} \cup \{x_i : i \in S_k\}$ in increasing order, and (2)~remove all $v_r$'s that do not immediately precede some $x_i$'s in the sorted list.  Let $H_k^V$ denote the entropy of the distribution of $B_k^V$.  Determining $B^V_k$ takes at least $H_k^V$ expected time by Shannon's theory~\cite{cover06}.

%For each $k \in [1,g]$, define a random variable $\hat{B}_k^V$ to be the label of the predecessor of $x_{s_k}$ in the $V$-list.  Let $\hat{H}^V_k$ denote the entropy of the distribution of $\hat{B}^V_k$.  Determining the predecessor of $x_{s_k}$ in the $V$-list takes at least $\hat{H}^V_k$ expected time.

Our algorithm uses Lemma~\ref{lem:train-2}(ii) to  
%queries $T_k$ 
%and $\hat{T}_k$ 
%for $k \in [1,g]$, 
construct $\sigma_k \cap [v_r,v_{r+1})$ for all $k$ and $r$ in $O(E + n)$ expected time, where $E$ is the total expected time to query the $T_k$'s.
Then, it performs mergings in $O(\sum_{r=0}^n \sum_{k=1}^g |\sigma_k \cap [v_r,v_{r+1})| \log |Z_r|)$ time. Recall that $|Z_r|$ is the number of classes that have numbers falling into $[v_r,v_{r+1})$.  As shown in Lemma~3.4 in~\cite{ailon11} and the discussion that immediately follows its proof, the expected query complexity of $T_k$ is $O(H^V_k/\eps)$.  The limiting complexity is thus equal to
\begin{equation}
O\left(n + \frac{1}{\eps}\sum_{k=1}^g H^V_k\right) + 
O\left(\mathrm{E}\left[\sum_{r=0}^n \sum_{k=1}^g |\sigma_k \cap [v_r,v_{r+1})| \log |Z_r|\right]\right).
\label{eq:-1}
\end{equation}

We bound $\sum_{k=1}^g H^V_k$ and
$\mathrm{E}\left[\sum_{r=0}^n \sum_{k=1}^g |\sigma_k \cap [v_r,v_{r+1})| \log |Z_r|\right]$ in the rest of this section.  We need two technical results. 

\begin{lemma}{{\em \cite[Theorem~2.39]{ray}}}
	\label{lem:joint}
	Let $H(X_1,\cdots,X_n)$ be the joint entropy of independent random variables $X_1,\cdots,X_n$.  Then $H(X_1,\cdots,X_n) = \sum_{i=1}^n H(X_i)$.
\end{lemma}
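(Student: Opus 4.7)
The plan is to invoke the definition of entropy directly and use the fact that independence makes the joint mass function factor. Writing $p(x_1,\ldots,x_n)$ for the joint distribution and $p_i(x_i)$ for the marginals, independence gives $p(x_1,\ldots,x_n) = \prod_{i=1}^n p_i(x_i)$, so
\[
H(X_1,\ldots,X_n) \;=\; -\sum_{x_1,\ldots,x_n} \Bigl(\prod_{i=1}^n p_i(x_i)\Bigr) \sum_{j=1}^n \log p_j(x_j).
\]
I would then swap the order of summation (pull the $\sum_j$ outside), and for each fixed $j$ factor the sum over the remaining variables. Since $\sum_{x_i} p_i(x_i) = 1$ for every $i \ne j$, only the $j$-th marginal survives, leaving $-\sum_{x_j} p_j(x_j) \log p_j(x_j) = H(X_j)$. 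Summing over $j$ yields the claim.

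Alternatively — and perhaps more cleanly — I would proceed by induction on $n$ using the chain rule for entropy, $H(X_1,\ldots,X_n) = H(X_1,\ldots,X_{n-1}) + H(X_n \mid X_1,\ldots,X_{n-1})$. The base case $n=1$ is trivial. For the inductive step, independence of $X_n$ from $(X_1,\ldots,X_{n-1})$ gives $H(X_n \mid X_1,\ldots,X_{n-1}) = H(X_n)$ (this is the standard fact that conditioning on an independent variable does not change entropy, which itself follows from the factorization argument above applied to just two variables). Combining this with the inductive hypothesis $H(X_1,\ldots,X_{n-1}) = \sum_{i=1}^{n-1} H(X_i)$ closes the induction.

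There is essentially no obstacle here: the lemma is a classical identity from information theory (Cover and Thomas, or Ray as cited), and both proof routes amount to bookkeeping around the definition of entropy and the factorization of independent joint distributions. The only subtlety, if one wanted to be fully rigorous for continuous random variables, would be to ensure that the differential-entropy integrals converge absolutely so that Fubini's theorem can be applied when swapping the order of summation/integration; this is automatic in our application since the $X_i$'s of interest will be discrete.
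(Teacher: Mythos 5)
Your proof is correct, but note that the paper does not actually prove this lemma; it is stated as a citation to Yeung's textbook (\cite[Theorem~2.39]{ray}) and used as a black box. So there is no ``paper's proof'' to compare against. Both of your proposed routes --- the direct computation that factors the joint mass function and collapses the inner sums to one, and the induction via the chain rule $H(X_1,\ldots,X_n) = H(X_1,\ldots,X_{n-1}) + H(X_n \mid X_1,\ldots,X_{n-1})$ together with the fact that conditioning on independent variables does not change entropy --- are the standard textbook arguments, and either one is a complete and correct proof. Your remark about Fubini for differential entropy is a sensible caveat but, as you observe, moot here: the random variables in Lemma~\ref{lem:entropies} where this lemma is invoked are the permutations $B_k^V$, which are discrete with finite support, so only the discrete case is needed.
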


\begin{lemma}{{\em \cite[Lemma~2.3]{ailon11}}}
	\label{lem:relate}
	Let $X : {\cal U} \rightarrow {\cal X}$ and $Y : {\cal U} \rightarrow {\cal Y}$ be two random variables obtained with respect to the same arbitrary distribution over the universe $\cal U$.  Suppose that the function $f: (I,X(I)) \mapsto Y(I)$, $I \in {\cal U}$, can be computed by a comparison-based algorithm with $C$ expected comparisons, where the expectation is over the distribution on $\cal U$.  Then, $H(Y) \leq C + O(H(X))$.
\end{lemma}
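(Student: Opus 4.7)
The plan is to establish the bound by an information-theoretic reduction to the expected running time of the algorithm's decision tree. The endgame is the chain-rule inequality $H(Y)\le H(X,Y)=H(X)+H(Y\mid X)$ combined with the intermediate target $H(Y\mid X)\le C$; together they yield $H(Y)\le H(X)+C$, which is even stronger than the stated $H(Y)\le C+O(H(X))$.

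To obtain $H(Y\mid X)\le C$, I would represent the hypothesized comparison-based algorithm as a binary decision tree $T$ whose internal nodes perform comparisons and whose leaves carry output labels. On input $(I,X(I))$ the algorithm traces a unique root-to-leaf path $P(I)$ of length $L(I)$, with $\mathbb{E}[L]=C$ by hypothesis. Since distinct leaves correspond to distinct root-to-leaf strings, the range of $P$ is a prefix-free subset of $\{0,1\}^*$; and since the algorithm is deterministic, $Y(I)=f(I,X(I))$ is a function of $(P(I),X(I))$. For each fixed value $x$ of $X$, I would invoke Shannon's source coding inequality applied to the identity code on the prefix-free range of $P$, obtaining $\mathbb{E}[|P|\mid X=x]\ge H(P\mid X=x)\ge H(Y\mid X=x)$, where the second inequality uses that $Y$ is a function of $(P,X)$. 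Averaging over $x$ yields $C=\mathbb{E}[L]\ge H(Y\mid X)$, and the chain-rule step above closes the argument.

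The main delicate point will be justifying that $Y(I)$ is genuinely a function of $(P(I),X(I))$: whenever two inputs $I,I'$ take the same path in $T$ and satisfy $X(I)=X(I')$, one needs $Y(I)=Y(I')$. This follows because a deterministic comparison-based algorithm produces its output at the reached leaf as a fixed function of the path and the auxiliary input $X(I)$, but it must be recorded carefully, since without it the prefix-code bound does not transfer from $P$ to $Y$. All remaining items---well-definedness of $T$ even when $L$ is unbounded, and the convention that entropies are measured in bits to match Shannon's inequality---are routine.
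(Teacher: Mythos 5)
The paper cites this lemma from Ailon et al.\ and gives no proof of its own, so I assess your argument directly. Your proof is correct. The decomposition $H(Y) \le H(X,Y) = H(X) + H(Y\mid X)$ combined with $H(Y\mid X) \le C$ is sound, and the second inequality is established properly: for each fixed value $x$, the algorithm restricted to inputs with $X=x$ is a deterministic comparison tree $T_x$; the set of comparison-outcome strings it can produce is prefix-free; and $Y$ is a deterministic function of the path together with $x$ (the step you correctly flag as the delicate one). Hence $H(Y\mid X=x)\le H(P\mid X=x)\le \mathrm{E}[L\mid X=x]$ by the Kraft inequality plus Jensen, and averaging over $x$ gives $H(Y\mid X)\le C$, closing the argument.

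Two remarks. First, the sentence ``the range of $P$ is a prefix-free subset of $\{0,1\}^*$'' is not quite right as a global claim: the algorithm may branch on the explicit auxiliary input $X(I)$, so there is one tree $T_x$ per value $x$, and paths from different trees need not be mutually prefix-free. You only use prefix-freeness conditioned on $X=x$, which does hold, so the argument is unaffected, but the global phrasing should be dropped in favor of the per-$x$ statement. Second, your chain-rule route yields $H(Y)\le H(X)+C$ exactly, which is at least as strong as the cited form $H(Y)\le C+O(H(X))$; a direct encoding proof (prefix-encode $X$, then append the comparison path) would instead give $H(Y)\le H(X)+C+O(1)$, which does not literally reproduce the stated bound when $H(X)$ is small. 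So the chain-rule approach is not only clean but is also what makes the stated inequality go through with the constant $1$ on $C$.
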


We show that $\sum_{k=1}^g H^V_k = O(n + H_\pi)$.

\begin{lemma}
	\label{lem:entropies}
	$\sum_{k=1}^g H^V_k = O\left(n + H(B^V)\right) = O\left(n + H_\pi \right)$.
	\cancel{
	    \hspace{.2in}
		\begin{emromani}
			\item $\sum_{k=1}^g H^V_k = O\left(n + H(B^V)\right) = O\left(n + H_\pi \right)$,
			\item $\sum_{k=1}^g \hat{H}^V_k = O(n + H_\pi)$.
		\end{emromani}
	}
\end{lemma}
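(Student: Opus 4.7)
The plan is to split the chain into two estimates: first $\sum_{k=1}^g H^V_k \leq H(B^V)$, and then $H(B^V) = O(n + H_\pi)$.

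For the first estimate, I would treat the $V$-list as fixed (it is determined at the end of training) and argue that, conditional on the $V$-list, the random variables $B_1^V,\ldots,B_g^V$ are mutually independent. Indeed, $B_k^V$ is a function of $\{x_i : i \in S_k\}$, which is itself a function of the single parameter $z_k$, and the $z_k$'s are drawn independently across classes. Lemma~\ref{lem:joint} then yields $\sum_{k=1}^g H^V_k = H(B_1^V,\ldots,B_g^V)$. Next I would observe that each $B_k^V$ is a deterministic function of $B^V$: from the sorted interleaving encoded by $B^V$ one retains only the labels in $\{0,\ldots,n+1\} \cup \{n+1+i : i \in S_k\}$ and then discards the $v$-labels that no longer sit immediately before some retained $x$-label. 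The data-processing inequality gives $H(B_1^V,\ldots,B_g^V) \leq H(B^V)$, which combines with the previous identity to establish the first estimate.

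For the second estimate, I would apply Lemma~\ref{lem:relate} with $X = \pi(I)$ and $Y = B^V$. Given $I$ together with $\pi(I)$, the $x_i$'s can be written in sorted order with no comparisons, and a single linear-scan merge against the already-sorted $V$-list $v_0 < v_1 < \cdots < v_{n+1}$ produces $B^V$ using $O(n)$ comparisons. Lemma~\ref{lem:relate} then delivers $H(B^V) \leq O(n) + O(H_\pi) = O(n + H_\pi)$.

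The main subtlety I expect is the handling of ties between an $x_i$ and a $v_r$, especially when $x_i$ is degenerately distributed and is by design equal to some $v_r$ chosen during training. Fixing a consistent tie-breaking rule based on the labels keeps the merging step deterministic and still $O(n)$ in comparisons, and likewise keeps the map $B^V \mapsto B_k^V$ well-defined, so both inequalities go through uniformly.
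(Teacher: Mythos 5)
Your proposal is correct, and the second half (bounding $H(B^V)$ via Lemma~\ref{lem:relate} with $X=\pi(I)$ and $Y=B^V$, using an $O(n)$ merge) coincides exactly with the paper's argument. The first half takes a genuinely different route. The paper applies Lemma~\ref{lem:relate} a second time: it describes an $O(n)$-comparison procedure that reads $B^V$ and emits the $B_k^V$'s, concluding $H(B_1^V,\ldots,B_g^V) = O(n + H(B^V))$. You instead observe that each $B_k^V$ is a deterministic function of $B^V$ (restrict to the labels in $\{0,\ldots,n+1\}\cup\{n+1+i : i\in S_k\}$ and discard the $v$-labels not immediately preceding a retained $x$-label), so $H(B_1^V,\ldots,B_g^V) \leq H(B^V)$ by the elementary fact that applying a deterministic map cannot increase entropy. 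Both approaches rely identically on Lemma~\ref{lem:joint} to convert $\sum_k H_k^V$ into the joint entropy $H(B_1^V,\ldots,B_g^V)$. Your version is slightly cleaner and a hair tighter: it avoids routing a purely information-theoretic fact through the algorithmic Lemma~\ffootnotesize\ref{lem:relate}, and it drops the superfluous additive $O(n)$ slack in the first inequality. The paper's version is more uniform in style, using Lemma~\ref{lem:relate} as the single workhorse for both estimates. Either way the final bound $O(n + H_\pi)$ is unchanged, and your treatment of ties (deterministic tie-breaking by label so that both the merge and the map $B^V \mapsto B_k^V$ are well-defined) is a sound, if implicit, point that the paper also glosses over.
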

\begin{proof}
	%Consider (i).  
	Suppose that we are given a setting of $B^V$, i.e., the permutation of labels from left to right in the sorted order of $\{v_0,\cdots,v_{n+1}\} \cup \{x_1,\cdots,x_n\}$.  We scan the sorted list from left to right.  We maintain the most recently scanned $v_r$.  Suppose that we see a number $x_i$.  Let $S_k$ be the class to which $x_i$ belongs.  If this is the first time that we encounter an index in $S_k$ after seeing $v_r$, we initialize an output list for $B_k^V$ that contains the label of $v_r$ followed by the label of $x_i$.  If this is not the first time that we encounter an index in $S_k$ after seeing $v_r$, we append the label of $x_i$ to the output list for $B_k^V$.  Clearly, we obtain the settings of all $B_k^V$'s correctly from $B^V$.  The number of comparisons needed is $O(n)$.  Therefore, Lemmas~\ref{lem:joint} and~\ref{lem:relate} imply that $\sum_{k=1}^g H^V_k = H(B_1^V,\cdots,B_g^V) = O(n + H(B^V))$.
	
	Given $(I,\pi(I))$, we use $\pi(I)$ to sort $I$ and then merge the sorted order with $(v_0,\cdots,v_{n+1})$.  Afterwards, we scan the sorted list to output the labels of the numbers.  This gives the setting of $B^V$.  Clearly, $O(n)$ comparisons suffice, and so Lemma~\ref{lem:relate} implies that $H(B^V) = O(n + H_\pi)$.  
	%This completes the proof of (a).
	%
	%The settings of $\hat{B}_1^V, \cdots, \hat{B}_g^V$ can be derived similarly by using $\pi(I)$ to sort $I$, merging the sorted sequence with $(v_0,\cdots,v_{n+1})$, and then scanning the merged sequence.  Then, Lemmas~\ref{lem:joint} and~\ref{lem:relate} imply that $\sum_{k=1}^g \hat{H}^V_k = H(\hat{B}_1^V,\cdots,\hat{B}_g^V) = O(n + H_\pi)$, establishing (b).
\end{proof}

\cancel{
We will show that it holds with high probability that $\mathrm{E}[|Z_r|] = O(1)$ for all $r \in [0,n]$ simultaneously.  
%By this result and Jensen's inequality~\cite{inequality}, it holds with high probability that $\mathrm{E}\left[\log |Z_r|\right] \leq \log\left(\mathrm{E}\left[|Z_r|\right]\right) = O(1)$ for all $r \in [0,n]$ simultaneously.  Therefore, 
It implies that $\mathrm{E}\left[\max_{r \in [0,n]} |Z_r|\right] = O(1)$ with high probability.  Then,
\begin{align*}
\mathrm{E}\left[\sum_{r=0}^n |\sigma_k \cap [v_r,v_{r+1})| \log |Z_r|\right] 
& \leq \mathrm{E}\left[\max_{r \in [0,n]} |Z_r|  \cdot \sum_{r=0}^n |\sigma_k \cap [v_r,v_{r+1})| \right]  \\
& = |\sigma_k| \cdot \mathrm{E}\left[\max_{r \in [0,n]} |Z_r|  \right] \\
& = O(|\sigma_k|).
\end{align*}
Hence,
\begin{align*}
	\mathrm{E}\left[\sum_{r=0}^n \sum_{k=1}^g |\sigma_k \cap [v_r,v_{r+1})| \log |Z_r|\right] 
	& = \sum_{k=1}^g \mathrm{E}\left[ \sum_{r=0}^n |\sigma_k \cap [v_r,v_{r+1})| \log |Z_r|\right] \\
	& \leq O\left(\sum_{k=1}^g |\sigma_k|\right) \\
	& = O(n).
\end{align*}
The second term in \eqref{eq:-1} can thus be replaced by $O(n)$.
}

Lemma~\ref{lem:entropies} takes care of the first term in \eqref{eq:-1}.  We will show that the second term in \eqref{eq:-1} is $O(n)$ with high probability.  We first prove that $\mathrm{E}[|Z_r|] = O(1)$ for all $r \in [0,n]$ with high probability.  Our proof is modeled after the proof of a similar result in~\cite{ailon11}.  There is a small twist due to the handling of the classification.

\begin{lemma}
	\label{lem:Z}
	It holds with probability at least $1-1/n$ that for all $r \in [0,n]$, $\mathrm{E}[|Z_r|] = O(1)$.
\end{lemma}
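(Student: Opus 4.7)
The plan is to reduce $|Z_r|$ to the cardinality $|\{i\in[n]:x_i\in[v_r,v_{r+1})\}|$ via a union bound over class members, and then to invoke an argument on the $V$-list along the lines of Ailon et al.~\cite{ailon11}.

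First, I would carry out the reduction that constitutes the ``small twist.'' Because $Z_r$ counts classes rather than numbers,
\[
|Z_r| \;=\; \sum_{k=1}^g \mathbf{1}\!\left[\exists\, i\in S_k:\, x_i\in[v_r,v_{r+1})\right] \;\leq\; \sum_{i=1}^n \mathbf{1}[x_i\in[v_r,v_{r+1})].
\]
Taking expectation over a fresh input instance $I$ and writing $\mu_{i,r}$ for the marginal probability that $x_i$ lands in $[v_r,v_{r+1})$, linearity of expectation gives $\mathrm{E}_I[|Z_r|] \leq \sum_{i=1}^n \mu_{i,r}$. Once we pass from classes to indices this way, the joint distribution of $I$ no longer matters: only the one-dimensional marginal of each $x_i$ does.

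Second, I would show that with probability at least $1-1/n$ over the $\ln n$ training instances used to build the $V$-list, one has $\sum_i \mu_{i,r}=O(1)$ simultaneously for every $r\in[0,n]$. The argument parallels Ailon et al.~\cite{ailon11}. The crucial observation that makes their proof go through in our setting is that for each fixed $i$, the $\ln n$ values $x_i^{(1)},\ldots,x_i^{(\ln n)}$ seen in the training instances are independent and identically distributed according to the marginal of $x_i$, even though within a single instance the numbers belonging to one class are deterministically linked. The $V$-list is the sequence of empirical quantiles at ranks $\ln n, 2\ln n,\ldots, n\ln n$ of the sorted concatenation of all $n\ln n$ training numbers. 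A Chernoff-type bound on the count $\sum_i N_{i,r}$ of training samples in $[v_r,v_{r+1})$ (which equals exactly $\ln n$ by construction) then upgrades to a bound on $\sum_i \mu_{i,r}$. Choosing constants so that the probability of $\sum_i\mu_{i,r}>C$ is at most $n^{-2}$ for each fixed $r$, and union-bounding over the $n+1$ intervals, yields the claimed failure probability at most $1/n$.

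The main obstacle is justifying the second step in the presence of the linear class structure: the training data is \emph{not} drawn from a product distribution, so the cross-coordinate independence used in~\cite{ailon11} fails. What rescues the argument is that step one has already discarded all cross-coordinate dependence; what we still need is per-index i.i.d.\ sampling across the $\ln n$ training instances, and this is preserved because distinct training instances are drawn independently. After this observation, the concentration analysis is essentially the one in~\cite{ailon11}.
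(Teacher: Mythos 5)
There is a genuine gap in the second step. Your reduction $|Z_r|\le\sum_{i=1}^n\mathbf{1}[x_i\in[v_r,v_{r+1})]$ is correct and passing to marginals $\mu_{i,r}$ is fine, but the concentration claim you then need --- that, with high probability, the observed training count of at most $\ln n$ in $[v_r,v_{r+1})$ forces $\sum_i\mu_{i,r}=O(1)$ --- does not follow from a Chernoff bound at the index level, because the summands are not independent. Within a single training instance $I_a$, the indicators $\mathbf{1}[x_i^{(a)}\in[v_r,v_{r+1})]$ for all $i$ in the same class $S_k$ are deterministic functions of the single parameter $z_k^{(a)}$, so they can all switch on or off together. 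The effective independent blocks are class--instance pairs $(a,k)$, each of which contributes a value in $[0,|S_k|]$ rather than $\{0,1\}$. A sum of $\ln n$ such blocks has variance that can be $\Theta(\ln n\cdot|S_k|^2)$, and the observed count being $\le\ln n$ then gives only a trivial (Hoeffding-type) bound on the mean, not the $O(1)$ bound you need. Noting that each $x_i^{(\cdot)}$ is i.i.d.\ across training instances does not repair this: the failure is across indices within an instance, which is exactly what the sum $\sum_i N_{i,r}$ mixes together.

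The paper's actual ``small twist'' goes the other direction. Rather than relaxing from classes to indices, it keeps the class granularity in the training count: it defines $\{0,1\}$-valued indicators $Y_\alpha^\beta(a,k)$, one per class--instance pair, recording whether \emph{some} element of $S_k$ in $I_a$ lands in the interval. These \emph{are} mutually independent (different classes use independent parameters; different instances are independent), so Chernoff applies with full strength, while the sum is still upper-bounded by $\ln n$ because each contributing pair owns at least one of the $\le\ln n$ training numbers in the interval. The expectation $\mathrm{E}[Y_{\alpha_r}^{\alpha_{r+1}}]$ is then exactly $\ln n\cdot\mathrm{E}[|Z_r|]-O(1)$, so the conclusion transfers directly without ever needing $\sum_i\mu_{i,r}=O(1)$ (which, in the presence of a large class, could be larger than $\mathrm{E}[|Z_r|]$). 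To fix your proof you would have to replace the per-index count $\sum_i N_{i,r}$ with a per-class--instance count and rebuild the expectation relation accordingly.
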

\begin{proof}
	Let $I_1, \cdots, I_{\ln n}$ denote the input instances used in the training phase for building the $V$-list.  Let $y_1,y_2,\cdots,y_{n\ln n}$ denote the sequence formed by concatenating $I_1, \cdots, I_{\ln n}$ in this order. We adopt the notation that for each $\alpha \in [1,n\ln n]$,  $y_\alpha$ belongs to the class $S_{k_\alpha}$ and the input instance $I_{a_\alpha}$. 
	
	Fix a pair of distinct indices $\alpha,\beta \in [1,n\ln n]$ such that $y_\alpha \leq y_\beta$.  Let ${\cal J}_{\alpha}^\beta$ be the set of index pairs $\{(a,k) : a \in [1,\ln n], k \in [1,g]\} \setminus \{(a_\alpha,k_\alpha),(a_\beta,k_\beta)\}$.   For any $(a,k) \in {\cal J}_\alpha^{\beta}$, let $Y_{\alpha}^{\beta}(a,k)$ be an indicator random variable such that if some element of the input instance $I_a$ that belongs to $S_k$ falls into $[y_\alpha,y_\beta)$, then $Y_{\alpha}^{\beta}(a,k) = 1$; otherwise, $Y_{\alpha}^{\beta}(a,k) = 0$.  Define $Y_{\alpha}^{\beta} = \sum_{(a,k) \in {\cal J}_{\alpha}^{\beta}} Y_{\alpha}^{\beta}(a,k)$.
	
	Among the $(a,k)$'s in ${\cal J}_{\alpha}^{\beta}$, the random variables $Y_{\alpha}^{\beta}(a,k)$ are independent from each other.  By Chernoff's bound, for any $\mu \in [0,1]$, 
	\[
	\pr{Y_{\alpha}^{\beta} > (1-\mu) \mathrm{E}[Y_{\alpha}^{\beta}]} > 1 - e^{-\mu^2\mathrm{E}[Y_{\alpha}^{\beta}]/2}.
	\]
	Since we take every $\ln n$ numbers in forming the $V$-list, we want to discuss the probability of $Y_{\alpha}^{\beta} > \ln n$.  This motivates us to consider $\mathrm{E}[Y_{\alpha}^{\beta}] > \ln n/(1-\mu)$.  We also want the probability bound $1 - e^{-\mu^2\mathrm{E}[Y_{\alpha}^{\beta}]/2}$ of $Y_{\alpha}^{\beta} > \ln n$ to be at least $1 - n^{-5}$.  This allows us to apply the union bound over at most $(n\ln n)(n\ln n - 1)$ choices of $\alpha$ and $\beta$ to obtain a probability bound of at least $1 - \ln^2 n/n^3$.  Therefore, as we consider $\mathrm{E}[Y^{\beta}_{\alpha}] > \ln n/(1-\mu)$, we want $1 - e^{-\mu^2\ln n/(2(1-\mu))} = 1 - n^{-\mu^2/(2(1-\mu))} = 1 - n^{-5}$.  Equivalently, we require $\mu^2/(2(1-\mu)) = 5$ which is satisfied by setting $\mu = \sqrt{35}-5 \approx 0.9161$.  We conclude that:
	\begin{quote}
	It holds with probability at least $1-\ln^2 n/n^3$ that for any pair of distinct indices $\alpha,\beta \in [1,n\ln n]$ such that $y_\alpha \leq y_\beta$, if $\mathrm{E}[Y_{\alpha}^{\beta}] > \frac{1}{6-\sqrt{35}}\ln n$, then $Y_{\alpha}^{\beta} > \ln n$.
	\end{quote}
	
	For every $r \in [0,n+1]$, let $y_{\alpha_r}$ denote $v_r$, where $y_{\alpha_0} = -\infty$ and $y_{\alpha_{n+1}} = \infty$.  Fix a particular $r \in [0,n+1]$.  By construction, there are at most $\ln n$ numbers among $I_1, \cdots, I_{\ln n}$ that fall in $[v_r,v_{r+1})$, which guarantees the event of $Y_{\alpha_r}^{\alpha_{r+1}} \leq \ln n$.  Our previous conclusion implies that $\mathrm{E}[Y_{\alpha_r}^{\alpha_{r+1}}] \leq \frac{1}{6-\sqrt{35}}\ln n$ with probability at least $1-\ln^2 n /n^3$.  
	
	We relate $\mathrm{E}[Y_{\alpha_r}^{\alpha_{r+1}}]$ to $\mathrm{E}[|Z_r|]$ as follows.  Let $X_{kr}$ be an indicator random variable such that if some element of the input instance that belongs to $S_k$ falls into $[v_r,v_{r+1})$, then $X_{kr} = 1$; otherwise, $X_{kr} = 0$.  Then $\sum_{k=1}^g X_{kr} = |Z_r|$, implying that $\sum_{k=1}^g \mathrm{E}[X_{kr}] = \mathrm{E}[|Z_r|]$.  The random process that generates the input instances is independent of the training phase.  It follows that %$\mathrm{E}[Y_{\alpha_r}^{\alpha_{r+1}}]$ is almost the same as $\sum_{a=1}^{\ln n} \sum_{k=1}^g \mathrm{E}[X_{kr}]$, except that the index pairs $(a_{\alpha_r},k_{\alpha_r})$ and $(a_{\alpha_{r+1}},k_{\alpha_{r+1}})$ are excluded from ${\cal J}_{\alpha_r}^{\alpha_{r+1}}$ but these two cases are considered in $\sum_{a=1}^{\ln n} \sum_{k=1}^g \mathrm{E}[X_{kr}]$.  Therefore,
	\begin{equation}
	\mathrm{E}[Y_{\alpha_r}^{\alpha_{r+1}}] \geq \left(\sum_{a=1}^{\ln n}\sum_{k=1}^g \mathrm{E}[X_{kr}] \right) - 2 = \ln n \cdot \mathrm{E}[|Z_r|] - 2
	\label{eq:1}
	\end{equation}
	because the index pairs $(a_{\alpha_r},k_{\alpha_r})$ and $(a_{\alpha_{r+1}},k_{\alpha_{r+1}})$ are excluded from ${\cal J}_{\alpha_r}^{\alpha_{r+1}}$ but they are considered in $\sum_{a=1}^{\ln n} \sum_{k=1}^g \mathrm{E}[X_{kr}]$.  
	
	We have shown previously that $\mathrm{E}[Y_{\alpha_r}^{\alpha_{r+1}}] \leq \frac{1}{6-\sqrt{35}}\ln n$ with probability at least $1-\ln^2 n/n^3$.  It follows that $\mathrm{E}[|Z_r|] = O(1)$ with probability at least $1-\ln^2 n/n^3$.  Since the above statement holds for every fixed $r \in [0,n]$, by the union bound, it holds with probability at least $1-1/n$ that $\mathrm{E}[|Z_r|] = O(1)$ for all $r \in [0,n]$.
\end{proof}

We are ready to bound the second term in \eqref{eq:-1}.  

\begin{lemma}
	\label{lem:2}
	 It holds with probability at least $1 - 1/n$ that \[\mathrm{E}\left[\sum_{k=1}^g \sum_{r=0}^n |\sigma_k \cap [v_r,v_{r+1})| \log |Z_r| \right] = O(n).\]
\end{lemma}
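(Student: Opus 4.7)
The plan is to condition on the high-probability event of Lemma~\ref{lem:Z} (which holds with probability at least $1-1/n$ over the training phase), so that $\mathrm{E}[|Z_r|]=O(1)$ for every $r\in[0,n]$, and then to bound the double sum in expectation over the operation-phase input. It is convenient to abbreviate $W_{kr}:=|\sigma_k\cap[v_r,v_{r+1})|$ and $X_{kr}:=\mathbf{1}[W_{kr}\geq 1]$, so that $|Z_r|=\sum_{k}X_{kr}$ and $\sum_{k,r}W_{kr}=n$ holds deterministically. The goal is to show $\mathrm{E}\bigl[\sum_{k,r}W_{kr}\log|Z_r|\bigr]=O(n)$.

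The key idea is to decompose by class and exploit the independence of the hidden parameters $z_1,\ldots,z_g$. Since every number in class $S_k$ is a function of $z_k$ alone, both $W_{kr}$ and $X_{kr}$ are measurable with respect to $z_k$; thus, conditioning on $z_k$ leaves the family $\{X_{jr}\}_{j\neq k}$ mutually independent and independent of $z_k$. Fix $k$ and $r$. When $W_{kr}=0$ the term $W_{kr}\log|Z_r|$ vanishes, so we may restrict to $W_{kr}\geq 1$, in which case $X_{kr}=1$ and $|Z_r|=1+\sum_{j\neq k}X_{jr}\geq 1$. Jensen's inequality applied conditionally on $z_k$ then gives
\[
\mathrm{E}\bigl[\log|Z_r|\,\bigm|\,z_k,\,W_{kr}\geq 1\bigr]\;\leq\;\log\!\Bigl(1+\textstyle\sum_{j\neq k}\mathrm{E}[X_{jr}]\Bigr)\;\leq\;\log\!\bigl(1+\mathrm{E}[|Z_r|]\bigr)\;=\;O(1),
\]
where the final step invokes the conditioned-on event from Lemma~\ref{lem:Z}. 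Multiplying by $W_{kr}$ and taking expectation over $z_k$ yields $\mathrm{E}[W_{kr}\log|Z_r|]\leq O(1)\cdot\mathrm{E}[W_{kr}]$; summing over $k$ and $r$ gives
\[
\mathrm{E}\Bigl[\sum_{k,r}W_{kr}\log|Z_r|\Bigr]\;\leq\;O(1)\sum_{k,r}\mathrm{E}[W_{kr}]\;=\;O(1)\cdot n.
\]

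The main obstacle I foresee is the correlation between $N_r:=\sum_k W_{kr}$ and $|Z_r|$: a single class can deposit many elements into one interval, inflating $N_r$ while contributing only $1$ to $|Z_r|$, so a crude factoring such as $\mathrm{E}[N_r\log|Z_r|]\leq\mathrm{E}[N_r]\log\mathrm{E}[|Z_r|]$ is unjustified. The class-wise decomposition sidesteps this, because once $z_k$ is fixed, the remaining randomness in $|Z_r|$ comes only from the other independent $z_j$'s, to which Jensen can be applied safely.
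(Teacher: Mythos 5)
Your proof is correct and follows essentially the same route as the paper's: both condition on the event of Lemma~\ref{lem:Z}, decompose the sum by class, and exploit the fact that, on the event that class $S_k$ contributes to $[v_r,v_{r+1})$, the quantity $|Z_r|$ equals one plus a count over the \emph{other} classes that is independent of $z_k$. The only cosmetic difference is that the paper discards the logarithm via $\log|Z_r|\leq|Z_r|$ and then factors the resulting product expectation exactly, whereas you keep the logarithm and apply conditional Jensen's inequality; both yield the same $O(n)$ bound.
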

\begin{proof}
Let $n_{kr}$ denote $|\sigma_k \cap [v_r,r_{r+1})|$.  Let $z_r$ denote $|Z_r|$.  The largest possible values of $n_{kr}$ and $z_r$ are $n$ and $g$, respectively.
\[
	\mathrm{E}\left[\sum_{k=1}^g \sum_{r=0}^n n_{kr}\log z_r\right] 
	\leq \sum_{k=1}^g \sum_{r=0}^n \mathrm{E}\left[n_{kr}z_r\right] \\
	= \sum_{k=1}^g \sum_{r=0}^n \sum_{i=0}^{gn} i \cdot \pr{n_{kr}z_r = i}. 	
\]
The range of $i$ can be reduced to $[1,gn]$ without changing the sum:
\[
\sum_{i=0}^{gn} i \cdot \pr{n_{kr}z_r = i} =
\sum_{i=1}^{gn} i \cdot \pr{n_{kr}z_r = i} = 
\sum_{j=1}^{g} \sum_{l=1}^{n} jl \cdot \pr{z_r = j \wedge n_{kr} = l}.
\]
The last equality follows from the fact that if $j \not= j'$ or $l \not= l'$, then the events $z_r = j \wedge n_{kr} = l$ and $z_r = j' \wedge n_{kr} = l'$ are disjoint.

Let $y_{kr}$ be a random variable that counts the number of classes other than $S_k$ that have numbers in $[v_r,v_{r+1})$. In the event of $n_{kr} = l$ for some $l \in [1,n]$, the class $S_k$ has number(s) in $[v_r,v_{r+1})$, implying that $z_r = y_{kr} + 1$.  Therefore,
\begin{eqnarray*}
\sum_{j=1}^g
\sum_{l=1}^n jl \cdot \pr{z_r = j \wedge n_{kr} = l} 
& = & 
\sum_{j=0}^{g-1}
\sum_{l=1}^n (j+1)l \cdot \pr{y_{kr} = j \wedge n_{kr} = l} \\
& = & 
\sum_{j=0}^{g-1}
\sum_{l=1}^n (j+1)l \cdot \pr{y_{kr} = j}\cdot \pr{n_{kr} = l}.
\end{eqnarray*}
In the last step, the equality of $\pr{y_{kr} = j \wedge n_{kr} = l}$ and $\pr{y_{kr} = j}\cdot \pr{n_{kr} = l}$ follows from the independence of the events $y_{kr} = j$ and $n_{kr} = l$.  Hence,
\begin{eqnarray*}
	\mathrm{E}\left[\sum_{k=1}^g \sum_{r=0}^n n_{kr}\log z_r\right] & \leq  &
	\sum_{k=1}^g \sum_{r=0}^n \sum_{j=0}^{g-1} \sum_{l=1}^n (j+1)l \cdot \pr{y_{kr} = j} \cdot \pr{n_{kr} = l} \\
	& = & \sum_{k=1}^g \sum_{r=0}^n \sum_{j=0}^{g-1} (j+1) \cdot \pr{y_{kr} = j} \cdot \sum_{l=1}^n l \cdot \pr{n_{kr} = l} \\
	& = & 
	\sum_{k=1}^g \sum_{r=0}^n \sum_{j=0}^{g-1} \mathrm{E}[n_{kr}] \cdot (j+1) \cdot \pr{y_{kr} = j} \\
	& = & 
	\sum_{k=1}^g \sum_{r=0}^n \mathrm{E}[n_{kr}] \cdot \left(\sum_{j=0}^{g-1} j \cdot \pr{y_{kr} = j}  + \sum_{j=0}^{g-1} \pr{y_{kr}=j} \right)\\
	&= & 
	\sum_{k=1}^g \sum_{r=0}^n \mathrm{E}[n_{kr}] \cdot \left(\mathrm{E}[y_{kr}] + 1\right).
\end{eqnarray*}
For all $k \in [1,g]$, $z_r \geq y_{kr}$ by their definitions, and so $\mathrm{E}[z_r] \geq \mathrm{E}[y_{kr}]$.  By Lemma~\ref{lem:Z}, it holds with probability at least $1 - 1/n$ that $\mathrm{E}[y_{kr}] + 1 = O(1)$ for every $k \in [1,g]$ and every $r \in [0,n]$.  Finally,
\[
	\mathrm{E}\left[\sum_{k=1}^g \sum_{r=0}^n n_{kr}\log z_r\right] 
	\leq 
	O\left(\sum_{k=1}^g \sum_{r=0}^n \mathrm{E}[n_{kr}] \right) =
    O\left(\mathrm{E}\left[\sum_{k=1}^g \sum_{r=0}^n n_{kr}\right] \right) 
	= O(n).
\]
\end{proof}

By \eqref{eq:-1} and Lemmas~\ref{lem:entropies} and~\ref{lem:2}, we conclude that the limiting complexity of the sorter is $O(n/\eps + H_\pi/\eps)$ as stated in Theorem~\ref{thm:1}.  The $O(n^2)$ space needed by the operation phase follows from Lemma~\ref{lem:train-2}(ii).  In the training phase, the space usage, the number of input instances, and the processing time required follow from Lemmas~\ref{lem:train-1} and~\ref{lem:train-2}.  The success probability of $1 - 1/n$ follows from Lemma~\ref{lem:2}.  This completes the proof of Theorem~\ref{thm:1}.

\cancel{

\begin{theorem}
	\label{thm:1}
	For any $\eps \in (0,1)$, there exists a self-improving sorter of $O(n/\eps + H_\pi/\eps)$ limiting complexity for any product distribution with hidden linear classes.  The storage needed is $O(n^2)$.  The training phase processes $O(n^{\eps})$ input instances in $O(n^2\log^3 n)$ time, and it succeeds with probability at least $1-1/n$.
\end{theorem}

}

\section{Mixture of product distributions}

Let $\kappa$ be the number of product distributions in the mixture.  Although $\kappa$ is hidden, we are given an upper bound $m$ of $\kappa$.  Let ${\cal D}_q$, $q \in [1,\kappa]$, denote the hidden product distributions in the mixture.  The input distribution is $\sum_{q=1}^\kappa \lambda_q {\cal D}_q$ for some hidden positive $\lambda_q$'s such that $\sum_{q=1}^\kappa \lambda_q = 1$.  
%In other words, over a large collection of input instances, a fraction $\lambda_q$ of them are expected to be drawn from ${\cal D}_q$ for each $q \in [1,\kappa]$.  There is now dependency among the input numbers because they are drawn from the same ${\cal D}_q$.

\subsection{Training phase}
\label{sec:train}

Take $mn\ln(mn)$ input instances.  Denote them as $I_1, I_2, \ldots, I_{mn\ln(mn)}$.  For $a \in [1,mn\ln(mn)]$, let $x_i^{(a)}$ denote $x_i$ in $I_a$.  For every $i \in [1,n]$ and every $a \in [(i-1)m\ln(mn)+1,im\ln(mn)]$, define 
\[
s_a = x_{i}^{(a)}.
\] 
That is, we take $x_1$'s in $I_1,\ldots,I_{m\ln(mn)}$ to be $s_1,\ldots,s_{m\ln(mn)}$,  $x_2$'s in $I_{m\ln(mn)+1},\ldots,I_{2m\ln(mn)}$ to be $s_{m\ln(mn)+1},\ldots,s_{2m\ln(mn)}$, and so on.  

Sort $(s_1,s_2,\ldots,s_{mn\ln(mn)})$ in increasing order.  For $i \in [1,mn]$, define $v_i$ to be the number of rank $i\ln(mn)$ in the sorted list.  Then, construct the $V$-list $(v_0,v_1,\ldots,v_{mn},v_{mn+1})$, where $v_0 = -\infty$ and $v_{mn+1} = \infty$.  This step takes $O(mn\log^2(mn))$ time.  The $V$-list induces $mn+1$ intervals: $(-\infty,v_1)$, $[v_1,v_2)$, $\cdots$, $[v_{mn},\infty)$.  We will abuse the notation slightly to take $[v_0,v_1)$ to mean $(-\infty,v_1)$.  

To facilitate the operation phase, we group the $mn+1$ intervals into $n$ buckets as follows.  We group the first $m$ intervals into the first bucket, the next $m$ intervals into the second bucket, and so on.  There are $n$ buckets.  Each bucket contains $m$ intervals except for the last one which contains $m+1$ intervals.  Each interval keeps a pointer to the bucket that contains it.  Also, each bucket is associated with an initially empty van Emde Boas tree\cite{v77} with the intervals in that bucket as the universe.  Each tree has $O(m)$ size and can be initialized in $O(m)$ time.\footnote{The space usage according to the description in~\cite{v77} is $O(m\log m)$, but it can be improved to $O(m)$ as mentioned in~\cite{italiano}.} 

%We organize a balanced binary search tree $T^V$ whose nodes correspond to intervals in $V$.

Use another $O(m^\eps n^\eps)$ input instances to record the frequency $f_{ir}$ of $x_i$ falling into $[v_r,v_{r+1})$.  The frequencies are determined by locating the numbers in these $O(m^\eps n^\eps)$ input instances among the intervals using binary search.  The total time needed is $O(m^\eps n^{1+\eps}\log (mn))$.  Then, for every $i \in [1,n]$, build an asymptotically optimal binary search tree $T_i$ with respect to the $f_{ir}$'s on the intervals with positive frequencies.  Each $T_i$ has $O(m^\eps n^\eps)$ size and can be constructed in $O(m^\eps n^\eps)$ time~\cite{fredman75,mehlhorn75}.  If a search of $T_i$ reaches a node at depth below $\frac{\eps}{3}\log_2 (mn)$ or is unsuccessful, we answer the query by performing a binary search among the $mn+1$ intervals in $O(\log (mn))$ time.

Let $P_i$ be a random variable indicating the predecessor of $x_i$ in the $V$-list.  Let $H(P_i)$ denote the entropy of the distribution of $P_i$.  As shown in~\cite[Lemma~3.4]{ailon11}, querying $T_i$ takes $O(H(P_i)/\eps)$ expected time (including the binary search among the $mn+1$ intervals, if applicable).

We summarize the processing in the training phase in the following result.

\begin{lemma}
	\label{lem:train-3}
	The training phase constructs the following structures.
	\begin{emromani}
		\item The $V$-list $(v_0,v_1,\ldots,v_{mn+1})$ is constructed in $O(mn\log^2(mn))$ time using $mn\ln(mn)$ input instances and $O(mn\log(mn))$ space, where $v_0 = -\infty$, $v_{mn+1} = \infty$, and for $i \in [1,mn]$, $v_i$ is the number of rank $i\ln(mn)$ in $\bigcup_{i=1}^n \{ x_i^{(a)}: a \in [(i-1)m\ln(mn)+1,im\ln(mn)]\}$.
		
		\item The $mn+1$ intervals induced by the $V$-list are organized as $n$ consecutive buckets of $m$ intervals each, except for the last bucket which contains $m+1$ intervals.  Each bucket keeps an initially empty van Emde Boas tree with the intervals in that bucket as the universe.  The processing time and space needed are $O(mn)$.
		
		\item  Search trees $T_i$ for $i \in [1,n]$ are built on the intervals $[v_0,v_1),\ldots,[v_{mn},v_{mn+1})$ using $O(m^\eps n^\eps)$ input instances.  The processing time is $O(m^\eps n^{1+\eps}\log(mn))$ and the search trees use $O(m^\eps n^{1+\eps})$ space.  For any input instance $(x_1,\ldots,x_n)$ in the operation phase, $T_i$ can be queried to find the interval that contains $x_i$ in  $O(H(P_i)/\eps)$ expected time.
	\end{emromani}
\end{lemma}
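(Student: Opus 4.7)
The plan is to verify each of the three parts in turn; since all the construction steps have already been described in the paragraphs above the lemma, the ``proof'' is essentially bookkeeping together with one citation for the expected query time.

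For part~(i), I would note that the sequence $(s_1,\ldots,s_{mn\ln(mn)})$ is merely a rearrangement of numbers already contained in the $mn\ln(mn)$ training instances, so no additional instances are consumed beyond those counted. Sorting a list of length $mn\ln(mn)$ by a standard comparison sort costs $O(mn\ln(mn)\cdot\log(mn\ln(mn))) = O(mn\log^2(mn))$ time and $O(mn\log(mn))$ space, and selecting every $\ln(mn)$-th entry to form $v_1,\ldots,v_{mn}$ is then linear. Prepending $v_0=-\infty$ and appending $v_{mn+1}=\infty$ is trivial.

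For part~(ii), the $mn+1$ intervals partition into exactly $n$ contiguous blocks of $m$ (the last absorbing the extra interval to obtain $m+1$). For each block I would initialize an empty van Emde Boas tree over a universe of size at most $m+1$; by the space improvement mentioned in the footnote this uses $O(m)$ time and space, so summing over $n$ buckets gives the claimed $O(mn)$ bound. Installing back-pointers from each interval to its bucket is a further linear pass.

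For part~(iii), three sub-claims must be verified. First, to fill the frequency table we binary-search each of the $n$ numbers in each of the $O(m^\eps n^\eps)$ training instances among the $mn+1$ intervals at cost $O(\log(mn))$ per lookup, giving $O(m^\eps n^{1+\eps}\log(mn))$ time in total. Second, for any fixed $i$ at most $O(m^\eps n^\eps)$ intervals can have $f_{ir}>0$ (one per observation), so each $T_i$ has $O(m^\eps n^\eps)$ nodes and, by the Fredman--Mehlhorn algorithm, is constructed in time linear in its size; summing over $i\in[1,n]$ gives $O(m^\eps n^{1+\eps})$ time and space. Third, for the expected query bound $O(H(P_i)/\eps)$ I would cite Lemma~3.4 of Ailon et al.\ directly, applied with $N=mn+1$: their result covers exactly the construction used here, namely an asymptotically optimal binary search tree truncated at depth $\tfrac{\eps}{3}\log_2 N$ with a fallback binary search over all $N$ intervals. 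The one substantive point I would double-check is that the depth cutoff is large enough that typical leaves of $T_i$ are still reached inside the tree while atypical events of very small estimated probability are safely absorbed by the $O(\log N)$ fallback; this is precisely what Ailon et al.'s analysis establishes, and no modification of their argument is required since the role of $n$ in their sorting setting is played here by $N=mn+1$.
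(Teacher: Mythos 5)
Your proposal is correct and follows essentially the same route as the paper, which presents this lemma as a summary of the construction described immediately before it: the same sorting cost accounting for part (i), the same $O(m)$-per-bucket initialization for part (ii), and the same appeal to Lemma~3.4 of Ailon et al.\ (with the universe of $mn+1$ intervals and the depth-$\frac{\eps}{3}\log_2(mn)$ cutoff plus binary-search fallback) for the $O(H(P_i)/\eps)$ query bound in part (iii). The only bookkeeping you add beyond the paper's exposition --- that at most $O(m^\eps n^\eps)$ intervals per index $i$ can have positive frequency --- is a correct justification of the stated size of each $T_i$.
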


\subsection{Operation phase}
\label{sec:oper}

Given an input instance $I = (x_1,\cdots, x_n)$, for each $i \in [1,n]$, we search $T_i$ to place $x_i$ in the interval $[v_r,v_{r+1})$ that contains it. For each $r \in [0,mn]$, the interval $[v_r,v_{r+1})$ keeps a list $N_r$ of $x_i$'s that fall into it.  We sort each $N_r$ in $O(|N_r|\log |N_r|)$ time.  
Recall that querying $T_i$ with $x_i$ takes $O(H(P_i)/\eps)$ expected time, where $P_i$ is the random variable indicating the predecessor of $x_i$ in the $V$-list.  Therefore, the total time for processing $I$ is $O\left(\frac{1}{\eps}\sum_{i=1}^n H(P_i) + \mathrm{E}\left[\sum_{r=0}^{mn} |N_r|\log |N_r|\right] \right)$ plus the time to concatenate the sorted lists together.  One easy way to perform the concatenation is to scan all $mn+1$ intervals from left to right, but this takes $O(mn)$ time.  We describe an improvement below.

%We maintain a set $\cal U$ of disjoint ranges such that they cover $\mathbb{R}$ together, and each range is a union of one or more intervals in $V$.  Initially, $U = \{(-\infty,\infty)\}$.  When $x_1$ arrives, $T_1$ tells us the interval $[v_r,v_{r+1})$ that contains it.  So we split $(\infty,\infty)$ and update $U := \{(\infty,v_r), [v_r,v_{r+1}), [v_{r+1},\infty)\}$.  The list $N_r$ associated with $[v_r,v_{r+1})$ contains $x_1$ alone.  When $x_2$ arrives, $T_2$ tells the interval $[v_s, v_{s+1})$ that contains it.  We locate the range $R \in {\cal U}$ that contains $[v_s,v_{s+1})$.  If $R = [v_r,v_{r+1})$, then we add $x_2$ to $N_r$.  Otherwise, suppose that $[v_s,v_{s+1}) \subset [v_{r+1},\infty)$ and $v_s > v_{r+1}$, we split $[v_{r+1},\infty)$ and update ${\cal U} := \{ (\infty,v_r), [v_r,v_{r+1}), [v_{r+1}, v_s), [v_s,v_{s+1}), [v_{s+1},\infty)\}$, and the list $N_s$ associated with $[v_s,v_{s+1})$ contains $x_2$ alone.  After sorting the input instance, we need to restore ${\cal U}$ to $\{(-\infty,\infty)\}$ in preparation for sorting the next input instance.

%The above example illustrates that we need a data structure to maintain $\cal U$ that supports three operations: (1)~given a singleton interval $[v_r,v_{r+1})$, find the range $R \in {\cal U}$ that contains it; (2)~given a value $v_r$ and a range $R \in {\cal U}$ such that $v_r \in R$, split $R$ into $R_1 = \{y \in R : y < v_r\}$, $R_2 = [v_r,v_{r+1})$, and $R_3 = \{y \in R : y \geq v_{r+1}\}$; (3) merge two consecutive ranges in $\cal U$ into one range. 

\begin{enumerate}
	
	\item By Lemma~\ref{lem:train-3}(ii), the $mn+1$ intervals are grouped into $n$ buckets in the training phase.  For each bucket $B$, let $U_B$ denote the van Emde Boas tree for $B$ which is initially empty.  The universe for $U_B$ is the set of intervals in $B$.  We merge the $N_r$'s for the intervals within each bucket as follows.
	
	\item  For each input number $x_i$, we perform the following steps. 
	\begin{enumerate}
		
		\item Let $[v_r,v_{r+1})$ be the interval containing $x_i$ which has been located using $T_i$.  Let $B$ be the bucket pointed to by $[v_r,v_{r+1})$.
		
		\item We search for $[v_r,v_{r+1})$ in $U_B$.  If the search fails, insert $[v_r,v_{r+1})$ into $U_B$; otherwise, do nothing.  
		
		\end{enumerate}
	
	\item By now, for each bucket $B$, $U_B$ stores all non-empty intervals in $B$.  We have already discussed the sorting of each $N_r$.  We scan the $n$ buckets in left-to-right order.  For each bucket $B$ encountered, we find the minimum element in $U_B$ and then find successors in $U_B$ iteratively.  This allows us to visit the non-empty $N_r$'s in $B$ in increasing order of $r$, so we can output the sorted $N_r$'s in increasing order.  At the end, we delete all elements from $U_B$ for each bucket $B$ in preparation for sorting the next input instance.
	
	\item The total time needed is $O(n)$ plus the time for manipulating the $n$ van Emde Boas trees.  The van Emde Boas tree~\cite{v77} supports ordered dictionary operations in $O(\log\log N)$ worst-case time each, where $N$ is the size of the universe.  This is $O(\log\log m)$ time in our case.  
	
\end{enumerate}

\begin{lemma}
	\label{lem:oper}
	In the operation phase, the search trees $T_i$'s, the $V$-list, and the van~Emde~Boas trees require $O(m^\eps n^{1+\eps})$, $O(mn)$, and $O(mn)$ space, respectively.  Sorting an input instance takes $O\left(n\log\log m + \frac{1}{\eps}\sum_{i=1}^n H(P_i) + \mathrm{E}\left[\sum_{r=0}^{mn} |N_r|\log |N_r|\right] \right)$ expected time.
\end{lemma}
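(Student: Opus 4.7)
\proof (Proposal.) The plan is to verify the space bounds and then account for every step of the operation phase described in the enumerated list above, invoking Lemma~\ref{lem:train-3} for the parts handled by the training structures.

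For the space bounds, the $T_i$'s already come with an $O(m^\eps n^{1+\eps})$ budget from Lemma~\ref{lem:train-3}(iii). The $V$-list is simply the array $(v_0,\ldots,v_{mn+1})$, which uses $O(mn)$ space. For the van~Emde~Boas trees, each of the $n$ buckets has universe size $m$ (or $m+1$), so by the $O(m)$ construction referenced in the footnote, each tree uses $O(m)$ space and the $n$ trees together use $O(mn)$ space. The $N_r$ lists together contain $n$ numbers and hence add only $O(n)$ more.

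For the time bound, I would split the work into four pieces and sum expectations by linearity. First, locating each $x_i$ in its interval $[v_r,v_{r+1})$ via $T_i$ costs $O(H(P_i)/\eps)$ in expectation by Lemma~\ref{lem:train-3}(iii), so the total over $i$ is $O\bigl(\frac{1}{\eps}\sum_{i=1}^n H(P_i)\bigr)$. Second, for each of the $n$ numbers $x_i$ we perform one membership query in the appropriate bucket's van~Emde~Boas tree $U_B$ and at most one insertion into $U_B$, each in $O(\log\log m)$ worst-case time; this contributes $O(n\log\log m)$. Third, sorting each non-empty $N_r$ costs $|N_r|\log|N_r|$, and since the lists $N_r$ partition the input, the sum is at most $\sum_{r=0}^{mn}|N_r|\log|N_r|$; taking expectation yields the third term in the claimed bound. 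Fourth, the final scan visits each bucket and repeatedly extracts the minimum/successor from $U_B$, then deletes all elements for cleanup; since the number of non-empty intervals is at most $n$, the total number of van~Emde~Boas operations is $O(n)$, contributing another $O(n\log\log m)$. The concatenation of the already-sorted $N_r$'s into the output costs $O(n)$ additional time.

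The only subtle point, and hence the part to write carefully, is the bookkeeping that shows the number of van~Emde~Boas operations is $O(n)$ rather than $O(mn)$: each insertion and each successor step is charged to a distinct non-empty interval, and there are at most $n$ non-empty intervals because each one is witnessed by at least one $x_i$. Cleaning the tree is done by iterating minima and deleting, so the deletions are likewise bounded by the insertions performed earlier. Summing the four pieces gives the stated expected running time, completing the proof. \endproof
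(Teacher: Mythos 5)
Your proof is correct and follows essentially the same route as the paper, which justifies this lemma by the preceding description of the operation phase: space bounds inherited from the training-phase structures, query time $O(H(P_i)/\eps)$ per $T_i$, and $O(n)$ van~Emde~Boas operations at $O(\log\log m)$ each because only non-empty intervals are ever inserted, visited, or deleted. The ``subtle point'' you single out is indeed exactly the observation the paper uses to avoid the naive $O(mn)$ scan of all intervals.
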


\subsection{Analysis}

Let $I$ be an input instance.  Let $X_{ir}$ be a random variable such that if $x_i$ falls into $[v_r,v_{r+1})$, then $X_{ir} = 1$; otherwise, $X_{ir} = 0$.  We first bound $\sum_{q=1}^\kappa \sum_{i=1}^n \pr{X_{ir} = 1 \wedge I \sim {\cal D}_q }$.

\begin{lemma}
	\label{lem:N}
	Let $I$ be an input instance.  Let $X_{ir}$ be a random variable that is $1$ if $x_i \in [v_r,v_{r+1})$ and $0$ otherwise. It holds with probability at least $1 - 1/(mn)$ that for every $r \in [0,mn]$, $\sum_{q=1}^\kappa \sum_{i=1}^n \pr{X_{ir} = 1 \wedge I \sim {\cal D}_q} = O(1/m)$.
\end{lemma}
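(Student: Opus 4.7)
The plan is to adapt the proof of Lemma~\ref{lem:Z} to the mixture setting, exploiting the fact that the $V$-list is built from samples $s_a$ evenly partitioned among the $n$ input coordinates: for each $i \in [1,n]$ there are exactly $m\ln(mn)$ indices $a$ with $s_a = x_i^{(a)}$, and each such $s_a$ is a coordinate projection of an independent fresh draw $I_a$ from the mixture, hence distributed identically to $x_i$ in a future input instance. Equivalently, the lemma asks to prove $\mathrm{E}[|N_r|] = O(1/m)$ with high probability for every $r$.

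First I would fix distinct indices $\alpha,\beta \in [1,mn\ln(mn)]$ with $s_\alpha \leq s_\beta$ and set
\[
Y_\alpha^\beta = \sum_{a' \notin \{\alpha,\beta\}} \mathbf{1}[s_{a'} \in [s_\alpha,s_\beta)].
\]
Conditioned on $s_\alpha,s_\beta$, each $s_{a'}$ depends only on the fresh instance $I_{a'}$, so the indicators are mutually independent. A Chernoff bound yields $\pr{Y_\alpha^\beta > (1-\mu)\mathrm{E}[Y_\alpha^\beta]} \geq 1 - e^{-\mu^2\mathrm{E}[Y_\alpha^\beta]/2}$. I would then choose $\mu \in (0,1)$ so that whenever $\mathrm{E}[Y_\alpha^\beta] > \ln(mn)/(1-\mu)$, the failure probability is at most $(mn)^{-5}$, and take the union bound over the $O((mn\ln(mn))^2)$ pairs $(\alpha,\beta)$. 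This yields, with probability at least $1 - 1/(mn)$, the statement that every pair with $\mathrm{E}[Y_\alpha^\beta] > C\ln(mn)$ (for a suitable constant $C$) actually has $Y_\alpha^\beta > \ln(mn)$.

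Next, for each $r \in [0,mn]$, pick indices $\alpha_r,\alpha_{r+1}$ with $s_{\alpha_r} = v_r$ and $s_{\alpha_{r+1}} = v_{r+1}$. By construction of the $V$-list, the interval $[v_r,v_{r+1})$ contains at most $\ln(mn)$ of the $s_a$'s, so $Y_{\alpha_r}^{\alpha_{r+1}} \leq \ln(mn)$; the contrapositive of the high-probability statement above gives $\mathrm{E}[Y_{\alpha_r}^{\alpha_{r+1}}] \leq C\ln(mn)$. Now by linearity of expectation and the coordinate-partition structure of the $s_a$'s,
\[
\mathrm{E}[Y_{\alpha_r}^{\alpha_{r+1}}] \geq m\ln(mn)\sum_{i=1}^n \pr{x_i \in [v_r,v_{r+1})} - 2 = m\ln(mn)\sum_{q=1}^\kappa \sum_{i=1}^n \pr{X_{ir}=1 \wedge I \sim {\cal D}_q} - 2,
\]
where the $-2$ accounts for excluding $(\alpha_r,\alpha_{r+1})$ from the sum. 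Combining the two bounds and dividing by $m\ln(mn)$ produces the $O(1/m)$ bound simultaneously for all $r$, with the same $1-1/(mn)$ probability.

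The main obstacle I anticipate is the bookkeeping in the last step: making sure (i) the mixture structure does not create dependencies among the $s_a$'s that would break the Chernoff argument, and (ii) the partition of $s_a$'s across coordinates gives exactly the factor $m\ln(mn)$ in the lower bound on $\mathrm{E}[Y_{\alpha_r}^{\alpha_{r+1}}]$. Item~(i) is resolved because the instances $I_a$ are independent mixture draws and each $s_a$ is a single coordinate of its own instance; item~(ii) is resolved by the explicit definition $s_a = x_i^{(a)}$ with $i = \lceil a/(m\ln(mn)) \rceil$, which guarantees exactly $m\ln(mn)$ samples per coordinate.
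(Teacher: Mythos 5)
Your proposal is correct and follows essentially the same route as the paper's proof: the same indicator sum $Y_\alpha^\beta$ over the samples $s_{a'}$, the same Chernoff-plus-union-bound argument over all pairs $(\alpha,\beta)$, the same contrapositive step using the fact that each $V$-list interval contains at most $\ln(mn)$ samples, and the same lower bound $\mathrm{E}[Y_{\alpha_r}^{\alpha_{r+1}}] \geq m\ln(mn)\sum_i \pr{X_{ir}=1} - 2$ exploiting the $m\ln(mn)$ samples per coordinate. The paper even fixes the same constant ($\mu = \sqrt{35}-5$, giving $C = \frac{1}{6-\sqrt{35}}$) that you leave generic.
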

\begin{proof}
	In building the $V$-list in the training phase, we constructed the list $(s_1,s_2,\ldots,s_{mn\ln(mn)})$ using $mn\ln(mn)$ input instances $I_1, \cdots, I_{mn\ln(mn)}$, where $s_a$ is equal to $x_i$ in $I_a$ for every $i \in [1,n]$ and every $a \in [(i-1)m\ln(mn)+1,im\ln(mn)]$.
	
	For any $\alpha,\beta \in [1,mn\ln(mn)]$ such that $s_\alpha < s_\beta$,
	let ${\cal J}_\alpha^\beta = [1,mn\ln(mn)] \setminus \{\alpha,\beta\}$.  For every $i \in {\cal J}_\alpha^\beta$, define $Y_\alpha^\beta(i) = 1$ if $s_i \in [s_\alpha,s_\beta)$ and $Y_\alpha^\beta(i) = 0$ otherwise.  Then, define $Y_\alpha^\beta = \sum_{i \in {\cal J}_\alpha^\beta} Y_\alpha^\beta(i)$.  
	
	Among all $i \in {\cal J}_\alpha^\beta$, the variables $Y_\alpha^\beta(i)$'s are independent from each other because the $s_i$'s are taken from independent input instances.  By Chernoff's bound, for any $\mu \in [0,1]$, 
	\[
	\pr{Y_{\alpha}^{\beta} > (1-\mu) \mathrm{E}[Y_{\alpha}^{\beta}]} > 1-e^{-\mu^2\mathrm{E}[Y_{\alpha}^{\beta}]/2}.
	\]
	Since we take every $\ln(mn)$ numbers in forming the $V$-list, we want to discuss the probability of $Y_{\alpha}^{\beta} > \ln(mn)$.  This motivates us to consider $\mathrm{E}[Y_{\alpha}^{\beta}(q)] > \ln(mn)/(1-\mu)$.  We also want the probability bound $1 - e^{-\mu^2\mathrm{E}[Y_{\alpha}^{\beta}]/2}$ of $Y_{\alpha}^{\beta} > \ln(mn)$ to be at least $1 - m^{-5}n^{-5}$.  This allows us to apply the union bound over at most $mn\ln(mn)(mn\ln(mn) - 1)$ choices of $\alpha$ and $\beta$ to obtain a probability bound of at least $1 - \ln^2(mn)/(m^3n^3)$.  Therefore, as we consider $\mathrm{E}[Y^{\beta}_{\alpha}] > \ln(mn)/(1-\mu)$, we want $1 - e^{-\mu^2\ln(mn)/(2(1-\mu))} = 1 - (mn)^{-\mu^2/(2(1-\mu))} = 1 - m^{-5}n^{-5}$.  Equivalently, we require $\mu^2/(2(1-\mu)) = 5$ which is satisfied by setting $\mu = \sqrt{35}-5$.  We conclude that:
	\begin{quote}
		It holds with probability at least $1-\ln^2 (mn)/(m^3n^3)$ that for any $\alpha,\beta \in [1,mn\ln(mn)]$ such that $s_\alpha < s_\beta$, if $\mathrm{E}[Y_{\alpha}^{\beta}] > \frac{1}{6-\sqrt{35}}\ln(mn)$, then $Y_{\alpha}^{\beta} > \ln(mn)$.
	\end{quote}
	
	For every $r \in [0,mn+1]$, let $s_{\alpha_r}$ denote $v_r$, where $s_{\alpha_0} = -\infty$ and $s_{\alpha_{mn+1}} = \infty$.  Fix a particular $r \in [0,mn]$.  By construction, there are at most $\ln(mn)$ numbers among $s_1, \cdots, s_{mn\ln(mn)}$ that fall in $[v_r,v_{r+1})$, which guarantees the event of $Y_{\alpha_r}^{\alpha_{r+1}} \leq \ln(mn)$.  Our previous conclusion implies that:
	\begin{quote}
	It holds with probability at least $1-\ln^2 (mn)/(m^3n^3)$ that $\mathrm{E}[Y_{\alpha_r}^{\alpha_{r+1}}] \leq \frac{1}{6-\sqrt{35}}\ln(mn)$.  
	\end{quote}
	
	The random process that generates the input is independent of the training phase.  In the training phase, for each $i \in [1,n]$, we sample $m\ln(mn)$ $x_i$'s from $m\ln(mn)$ input instances to form $(s_1,\ldots,s_{mn\ln(mn)})$.  Therefore, 
	\begin{equation}
	\mathrm{E}[Y_{\alpha_r}^{\alpha_{r+1}}] \geq \left(\sum_{i=1}^n m\ln(mn) \cdot \pr{X_{ir} = 1}\right) - 2 
	\label{eq:N-1}
	\end{equation}
	because ${\cal J}_{\alpha_r}^{\alpha_{r+1}}$ excludes $\alpha$ and $\beta$, but $s_\alpha$ and $s_\beta$ are allowed in $\sum_{i=1}^n m\ln(mn) \cdot \pr{X_{ir} = 1}$.  Observe that 
	\[
	\sum_{i=1}^n \pr{X_{ir} = 1} = 
	\sum_{i=1}^n \sum_{q=1}^\kappa \pr{X_{ir}=1 \wedge I \sim {\cal D}_q} =  \sum_{q=1}^\kappa \sum_{i=1}^n \pr{X_{ir}=1 \wedge I \sim {\cal D}_q}.
	\]
	Rerranging terms in \eqref{eq:N-1} and applying the inequality $\mathrm{E}[Y_{\alpha_r}^{\alpha_{r+1}}] \leq \frac{1}{6-\sqrt{35}}\ln(mn)$
	give
	\[
	\sum_{q=1}^\kappa \sum_{i=1}^n \pr{X_{ir}=1 \wedge I \sim {\cal D}_q} \,\, \leq \,\,
	\frac{\mathrm{E}[Y_{\alpha_r}^{\alpha_{r+1}}]}{m\ln(mn)} + \frac{2}{m\ln(mn)} \,\, = \,\, O(1/m).
	\]
	Apply the union bound over $r \in [0,mn]$.  The probability bound is thus at least $1 - (mn+1)\ln^2(mn)/(m^3n^3) \geq 1-1/(mn)$.
\end{proof}  

Recall that $N_r$ is the subset of points that fall into $[v_r,v_{r+1})$ in the operation phase when sorting an input instance.  We bound the expected total time $\mathrm{E}\left[\sum_{r=0}^{mn} |N_r| \log |N_r|\right]$ to sort the $N_r$'s.

\begin{lemma}
	\label{lem:sortN}
	It holds with probability at least $1 - 1/(mn)$ that $\mathrm{E}\left[\sum_{r=0}^{mn} |N_r| \log |N_r|\right] = O(n)$.
\end{lemma}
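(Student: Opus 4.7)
The plan is to reduce $|N_r|\log|N_r|$ to the simpler quantity $|N_r|(|N_r|-1)$ via the elementary inequality $k\log k \leq k(k-1)$ (trivial for $k\in\{0,1\}$ and, for $k\geq 2$, a restatement of $\log k \leq k-1$), and then to evaluate the expectation by conditioning on which component of the mixture generated $I$. Writing $p_{ir}^{(q)} = \pr{X_{ir}=1 \mid I\sim{\cal D}_q}$ and $\mu_{q,r}=\sum_i p_{ir}^{(q)}$, the conditional independence of the coordinates under each product distribution ${\cal D}_q$ yields $\mathrm{E}[|N_r|(|N_r|-1) \mid I\sim{\cal D}_q] = \sum_{i\neq j} p_{ir}^{(q)} p_{jr}^{(q)} \leq \mu_{q,r}^2$, so the whole expectation is at most $\sum_q \lambda_q\sum_r \mu_{q,r}^2$.

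The main obstacle will be to show $\sum_q \lambda_q\sum_r \mu_{q,r}^2 = O(n)$, since the naive route via $\sum_r (\max_q \mu_{q,r}) \cdot \sum_q\lambda_q\mu_{q,r} \leq (mn+1)\cdot n\cdot O(1/m)$ only delivers $O(n^2)$. My approach will be to read Lemma~\ref{lem:N} as giving a per-component bound: for every pair $(q,r)$, $\lambda_q\mu_{q,r} \leq \sum_{q'} \lambda_{q'}\mu_{q',r} = O(1/m)$, so $\max_r \mu_{q,r} \leq O(1/(m\lambda_q))$. Combined with the identity $\sum_r \mu_{q,r} = \sum_i \sum_r p_{ir}^{(q)} = n$ (each $p_{i\cdot}^{(q)}$ is a probability distribution over the $mn+1$ intervals), this gives
\[
\sum_r \mu_{q,r}^2 \;\leq\; (\max_r \mu_{q,r}) \cdot \sum_r \mu_{q,r} \;\leq\; \frac{Cn}{m\lambda_q}
\]
for a suitable constant $C$. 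Hence $\lambda_q\sum_r \mu_{q,r}^2 = O(n/m)$ uniformly in $q$, and summing over the $\kappa$ mixture components produces $O(\kappa n/m) = O(n)$, where the assumption $\kappa \leq m$ is used in an essential way.

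Chaining the three steps --- the elementary inequality on $k\log k$, the mixture decomposition using conditional independence within each ${\cal D}_q$, and the per-component bound just described --- conditioned on the high-probability event of Lemma~\ref{lem:N}, completes the argument and inherits the claimed probability $1-1/(mn)$.
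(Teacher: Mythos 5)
Your proposal is correct and follows essentially the same route as the paper: bound $|N_r|\log|N_r|$ by a quadratic quantity, expand the second moment using conditional independence of the coordinates under each ${\cal D}_q$, and invoke Lemma~\ref{lem:N} componentwise (each nonnegative term $\lambda_q\mu_{q,r}$ inherits the $O(1/m)$ bound on the sum over $q$) together with $\sum_r\mu_{q,r}=n$ and $\kappa\le m$. The only cosmetic differences are that you use $k\log k\le k(k-1)$ where the paper uses $k\log k\le k^2$ and handles the diagonal term (contributing $n$) separately, and that you factor out $\max_r\mu_{q,r}$ rather than substituting the bound inside the double sum.
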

\begin{proof}
	\begin{eqnarray*}
		\mathrm{E}\left[\sum_{r=0}^{mn} |N_r|\log |N_r| \right] & \leq &
		\mathrm{E}\left[\sum_{r=0}^{mn} |N_r|^2 \right] \\
		& = & \mathrm{E}\left[\sum_{r=0}^{mn} \left(\sum_{i=1}^n X_{ir} \right)\left(\sum_{j=1}^n X_{jr}\right) \right] \\
		& = & \sum_{i=1}^n\sum_{j=1}^n \sum_{r=0}^{mn} \mathrm{E}\left[X_{ir}X_{jr} \right].
	\end{eqnarray*}
	Both $X_{ir}$ and $X_{jr}$ are random indicator variables.  If $X_{ir} = 1$ and $X_{jr} = 1$, then $X_{ir}X_{jr} = 1$; otherwise, $X_{ir}X_{jr} = 0$.  Therefore,
	\begin{eqnarray*}
		\sum_{i=1}^n\sum_{j=1}^n \sum_{r=0}^{mn}\mathrm{E}\left[X_{ir}X_{jr}\right] & = &
		\sum_{i=1}^n \sum_{j=1}^n \sum_{r=0}^{mn} \pr{X_{ir} = 1 \wedge X_{jr} = 1} \\
		& = &
		\sum_{i \not= j} \sum_{r=0}^{mn} \pr{X_{ir} = 1 \wedge X_{jr} = 1} + \sum_{i=1}^n \sum_{r=0}^{mn} \pr{X_{ir} = 1}.
	\end{eqnarray*}
	Since $x_i$ must fall into one of the $mn+1$ intervals, $\sum_{r=0}^{mn} \pr{X_{ir} = 1} = 1$, which gives 
	\[
	\sum_{i=1}^n \sum_{r=0}^{mn} \pr{X_{ir} = 1} = n. 
	\]
	Let $I$ denote an input instance.  Conditioned on $i \not= j$ and $I \sim {\cal D}_q$ for some $q \in [1,\kappa]$, $X_{ir} = 1$ and $X_{jr} = 1$ are two independent events, and so $\pr{X_{ir} = 1 \wedge X_{jr} = 1 | I \sim {\cal D}_q } = \pr{X_{ir}=1 | I \sim {\cal D}_q } \cdot \pr{X_{jr} = 1 |I \sim {\cal D}_q }$.  Therefore, 
	\begin{eqnarray*}
		& & \sum_{i\not= j} \sum_{r=0}^{mn} \pr{X_{ir} = 1 \wedge X_{jr} = 1} \\
		& = & 
		\sum_{i\not= j} \sum_{r=0}^{mn} \sum_{q=1}^{\kappa} \pr{X_{ir} = 1 \wedge X_{jr} = 1 | I \sim {\cal D}_q } \cdot \pr{I \sim {\cal D}_q} \\
		& = & 
		\sum_{i \not= j} \sum_{r=0}^{mn} \sum_{q=1}^{\kappa} \pr{X_{ir} = 1 | I \sim {\cal D}_q } \cdot \pr{X_{jr} = 1 | I \sim {\cal D}_q} \cdot \pr{ I \sim {\cal D}_q}.
	\end{eqnarray*}
	We expand the outermost summation over all $i \in [1,n]$ and $j \in [1,n]$.  Also, we replace $\pr{X_{jr} = 1 | I \sim {\cal D}_q} \cdot \pr{I \sim {\cal D}_q }$ by $\pr{X_{jr} = 1 \wedge I \sim {\cal D}_q }$.  Then,
	\begin{eqnarray*}
		& & \sum_{i\not= j} \sum_{r=0}^{mn} \pr{X_{ir} = 1 \wedge X_{jr} = 1} \\
		& \leq &
		\sum_{i=1}^n \sum_{j=1}^n \sum_{r=0}^{mn} \sum_{q=1}^{\kappa} \pr{X_{ir} = 1 | I \sim {\cal D}_q } \cdot \pr{X_{jr} = 1 \wedge I \sim {\cal D}_q } \\
		& = & 
		\sum_{q=1}^{\kappa} \left(\sum_{i=1}^n \sum_{r=0}^{mn} \pr{X_{ir}=1 | I \sim {\cal D}_q } \cdot \left(\sum_{j=1}^n \pr{X_{jr}=1 \wedge I \sim {\cal D}_q }\right)\right).
	\end{eqnarray*}
	By Lemma~\ref{lem:N}, it holds with probability at least $1-1/(mn)$ that for every $q \in [1,\kappa]$ and every $r \in [0,mn]$, the quantity $\sum_{j=1}^n \pr{X_{jr}=1 \wedge I \sim {\cal D}_q }$ is $O(1/m)$.  Therefore,
	\[
	\sum_{i\not= j} \sum_{r=0}^{mn} \pr{X_{ir} = 1 \wedge X_{jr} = 1}
	= O\left(\frac{1}{m}\sum_{q=1}^{\kappa} \left(\sum_{i=1}^n \sum_{r=0}^{mn} \pr{X_{ir}=1 | I \sim {\cal D}_q } \right)\right).
	\]
	Conditioned on a product distribution, $x_i$ must fall into one of the $mn+1$ intervals, and so $\sum_{r=0}^{mn} \pr{X_{ir}=1 |I \sim {\cal D}_q} = 1$, implying that $\sum_{i=1}^n \sum_{r=0}^{mn} \pr{X_{ir}=1 |I \sim {\cal D}_q} = n$.
	We conclude that
	\[
	\sum_{i\not= j} \sum_{r=0}^{mn} \pr{X_{ir} = 1 \wedge X_{jr} = 1}
	= 
	O\left(\frac{1}{m}\sum_{q=1}^{\kappa} n \right) = 
	O(n).
	\]
	This completes the proof.
\end{proof}

By Lemmas~\ref{lem:oper} and~\ref{lem:sortN}, sorting an input instance takes $O\left(n\log\log m + \frac{1}{\eps}\sum_{i=1}^n H(P_i)\right)$ expected time with probability at least $1-1/(mn)$, where $H(P_i)$ is the entropy of the random variable $P_i$ indicating the predecessor of $x_i$ in the $V$-list.  We bound $\sum_{i=1}^n H(P_i)$ in the following.

\begin{lemma}
	\label{lem:Hi}
	$\sum_{i=1}^n H(P_i) = O(n\log m + H_\pi)$.
\end{lemma}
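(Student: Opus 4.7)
The plan is to reduce the sum of marginal entropies to a joint entropy by conditioning on the hidden mixture component, and then to bound the joint entropy via Lemma~\ref{lem:relate} using an efficient merge that exploits the bucket structure of Lemma~\ref{lem:train-3}(ii).

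First I would introduce the hidden random variable $Q \in [1,\kappa]$ indicating which of the product distributions ${\cal D}_q$ generated the instance $I$. Since $\kappa \leq m$, the naive bound $H(Q) \leq \log \kappa \leq \log m$ holds. Using $H(P_i) = H(P_i\mid Q) + I(P_i;Q) \leq H(P_i\mid Q) + H(Q)$, summing over $i$ gives
\[
\sum_{i=1}^n H(P_i) \,\leq\, n\log m \,+\, \sum_{i=1}^n H(P_i\mid Q).
\]

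Next I would exploit the defining property of the input model: conditioned on $Q=q$, the numbers $x_1,\ldots,x_n$ are drawn independently from the product distribution ${\cal D}_q$, so the $P_i$'s (which are deterministic functions of the $x_i$'s and the fixed $V$-list) are mutually independent. Therefore, writing $B=(P_1,\ldots,P_n)$, we have $H(B\mid Q=q)=\sum_i H(P_i\mid Q=q)$ for every $q$, and averaging against $\lambda_q$ yields $\sum_i H(P_i\mid Q) = H(B\mid Q) \leq H(B)$, where the last inequality is ``conditioning reduces entropy.''

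The key step is to bound $H(B)=O(n\log m + H_\pi)$ by applying Lemma~\ref{lem:relate} with $X=\pi(I)$ and $Y=B$. Given $(I,\pi(I))$, sorting $I$ requires no comparisons. It then remains to locate each $x_i$ in the $V$-list. A direct merge with the $mn+2$ element $V$-list or a per-item binary search would cost $\Theta(n\log(mn))$ comparisons, which is too many. The trick is to use the bucket structure from Lemma~\ref{lem:train-3}(ii): merge the sorted $x_i$'s with the $n+1$ bucket boundaries $v_0,v_m,v_{2m},\ldots,v_{(n-1)m},v_{mn+1}$ in $O(n)$ comparisons to find the bucket of each $x_i$, then for each $x_i$ do a binary search among the $m$ intervals inside its bucket in $O(\log m)$ comparisons. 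The total is $O(n\log m)$ comparisons, independent of the input, so Lemma~\ref{lem:relate} gives $H(B)\leq O(n\log m) + O(H_\pi)$.

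Chaining these, $\sum_i H(P_i) \leq n\log m + H(B) = O(n\log m + H_\pi)$. I expect the only real obstacle to be step~3, namely noticing that the two-phase merge-then-search against the bucket decomposition saves a $\log n$ factor over the obvious $O(n\log(mn))$ bound; once that algorithmic observation is made, the conditioning argument and the appeal to Lemmas~\ref{lem:joint} and~\ref{lem:relate} are routine.
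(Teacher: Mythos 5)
Your proposal is correct and follows essentially the same route as the paper's proof: the same decomposition $H(P_i)\leq H(Q)+H(P_i\mid Q)$ with $H(Q)\leq\log\kappa$, the same use of conditional independence and ``conditioning reduces entropy'' to reduce to $H(P_1,\ldots,P_n)$, and the same bucket-based merge-then-binary-search argument yielding $O(n\log m)$ comparisons for Lemma~\ref{lem:relate}. No gaps.
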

\begin{proof}
	Let $Q$ be a random variable with value in the range $[1,\kappa]$ that indicates the specific product distribution from which the input instance is drawn.  Let $H(Q)$ be the entropy of $Q$.  
	
	By the chain rule for conditional entropy~\cite[Proposition~2.23]{ray}, we get 
	\[
	H(P_i) \leq H(P_i) + H(Q|P_i) = H(P_i,Q) = H(Q) + H(P_i|Q).
	\]
	The entropy of $Q$ is at most the logarithm of the domain size $\kappa$ of $Q$~\cite[Theorem~2.43]{ray}.  So $H(Q) \leq \log_2 \kappa$.  It follows that $\sum_{i=1}^n H(P_i) \leq n\log_2 \kappa + \sum_{i=1}^n H(P_i | Q)$.
	
	Note that $P_1|Q, P_2|Q, \ldots, P_n|Q$ are independent from each other because a product distribution is implied by the conditioning on $Q$.  It follows that $\sum_{i=1}^n H(P_i|Q) = H(P_1,P_2,\ldots,P_n|Q)$.  Conditioning does not increase entropy~\cite[Theorem~2.38]{ray}, and so $H(P_1,P_2,\ldots,P_n|Q) \leq H(P_1,P_2,\ldots,P_n)$.  Given the sorted order of the input instance $I$, we can figure out the values of $P_1, P_2,\ldots, P_n$ in $O(n\log m)$ time by merging the sorted order of $I$ with the $V$-list as follows.  As in the operation phase, we group the $mn+1$ intervals induced by the $V$-list into $n$ buckets, each containing $m$ intervals except the last bucket which contains $m+1$ intervals.  In $O(n)$ time, we can merge the sorted order of $I$ with the ordered list of $n$ buckets.  For each number $x_i \in I$ that lies in a bucket $B$, by comparing $x_i$ with middle $v_r$ value in $B$, we decide whether $x_i$ lies in the first $m/2$ intervals in $B$ or the other intervals in $B$.  Recursively, we can place $x_i$ in an interval in $O(\log m)$ time, which gives $P_i$.  The total time needed for all $n$ input numbers is $O(n\log m)$.  Then, Lemma~\ref{lem:relate} implies that $H(P_1,P_2,\ldots,P_n) = O(n\log m + H_\pi)$.

	Hence, $\sum_{i=1}^n H(P_i) = O(n\log m + n\log \kappa + H_\pi) = O(n\log m + H_\pi)$.
\end{proof}

%A simple scan of the list $V$ to concatenate the non-empty $N_r$'s gives a limiting complexity of $O(mn + \sum_{i=1}^n t_i)$.  When $m = o(\log\log n)$, $mn$ is smaller than $n\log\log(mn)$.
%which makes the simple scan of the list $V$ a better alternative than using the van Emde Boas tree.

The limiting complexity of $O\left((n\log m)/\eps + H_\pi/\eps \right)$ as stated in Theorem~\ref{thm:2} follows Lemmas~\ref{lem:oper}, \ref{lem:sortN}, and~\ref{lem:Hi}.  The $O(mn + m^\eps n^{1+\eps})$ space needed by the operation phase follows from Lemma~\ref{lem:oper}.  In the training phase, the space usage, processing time, and the number of input instances needed follow from Lemma~\ref{lem:train-3}.  The success probability of $1 - 1/(mn)$ follows from Lemma~\ref{lem:sortN}.  This completes the proof of Theorem~\ref{thm:2}.  In the interesting special case of $m = O(1)$, the limiting complexity is $O(n/\eps + H_\pi/\eps)$ which is optimal.

\section{Conclusion}

There are several possible directions for future research.  One is to extend the hidden classification to allow the $x_i$'s in the same class $S_k$ to be more arbitrary functions in the random parameter $z_k$.  Linear functions in $z_k$ have the nice property that any $x_i$ and $x_j$ in the same class are linearly related.  This helps us to learn the hidden classes.  
Another direction is to improve the performance in the case of a hidden mixture of product distributions.  It would also be interesting to design self-improving algorithms for other problems and possibly other input settings as well.

\section*{Acknowledgment}

We thank the anonymous reviewers for their valuable comments, suggesting a cleaner proof of Lemma~\ref{lem:Hi}, and alerting us to mistakes that we subsequently corrected.  
%In making the revision, we discovered an improvement of the limiting complexity in Theorem~\ref{thm:2} for $m = o(\log n)$ from the conference version of this paper~\cite{cheng}.

\end{document}